\gdef\@ctrerr{%
  \@latex@warning{Counter too large}}
\def\Journal#1#2#3#4{{#4} {\it #1} {\bf #2}, #3 }
\def\ud{\textrm{d}}
\newcommand{\smfrac}[2]{{\textstyle{#1\over#2}}}
\def\half{\tfrac{1}{2}}
\newcommand{\w}[1]{\bm{#1}} 
\def\h{h}
\def\u{\dot{u}}
\def\uc{\overline{\dot{u}}}
\def\zc{\overline{z}}
\def\U{\dot{U}}
\def\modU{\mathcal{U}}
\def\B{\mathfrak{B}} 
\def\E{\mathfrak{E}} 
\def\Q{\mathfrak{Q}} 
\def\R{\mathfrak{R}} 
\def\b{\mathfrak{b}} 
\def\O{\mathcal{O}} 
\def\N{\mathfrak{N}} 
\def\J{\mathcal{J}} 
\def\X{X} 
\def\Y{Y} 
\def\Z{Z} 
\def\kappac{\overline{\kappa}}
\def\tauc{\overline{\tau}}
\def\pic{\overline{\pi}}
\def\rhoc{\overline{\rho}}
\def\betac{\overline{\beta}}
\def\ez{\epsilon_0}
\def\d{\delta}
\def\dc{\overline{\delta}}
\def\Eone{\mathcal{E}_1}
\def\Eonec{\overline{\mathcal{E}_1}}
\def\Etwo{\mathcal{E}_2}
\def\Etwoc{\overline{\mathcal{E}_2}}
\def\Ethree{\mathcal{E}}
\def\F{\mathcal{F}}
\def\BB{\mathfrak{B}}
\newcommand{\et}{\eth}
\newcommand{\etd}{\eth '}
\def\ZZ{\mathcal{Z}}
\def\tfrac{\smfrac}
\newtheorem*{de}{Definition}
\newtheorem{re}{Remark}
\newtheorem{lm}{Lemma}
\newtheorem{te}{Theorem}
\newtheorem{co}{Corollary}
\begin{document}

\title[Shear-free perfect fluids]{Shear-free perfect fluids with a barotropic equation of state in general relativity: an update and a new formalism.}
\author{Norbert Van den Bergh}
\address{Belgisch Onderzoekscentrum voor Relativiteitstheorie en Kosmologie/ \\ Institut Belge de Recherche en Relativit\'{e} et Cosmologie,\\
 Holleweg, Molenstede, Belgium}
\eads{\mailto{norbert.vandenbergh@gmail.com}}

\begin{abstract}
The present status of the shear-free perfect fluid conjecture in general relativity is discussed: I give a review of recent work and present a GHP-like formalism which may provide a better understanding of the problems encountered. The use of the formalism is illustrated by means of several examples, among which there are some new results, as well as two short alternative proofs for the classical theorems, where the vorticity and acceleration are parallel or where the expansion and matter density are functionally related.
\end{abstract}

\pacs{04.20.Jb, 04.40.Nr}

\section{Introduction}
I consider perfect fluid solutions of the Einstein field equations\footnote{a large part of this introduction has been taken over from \cite{VdB-Radu2016}},
\begin{equation}
R_{ab} - \tfrac{1}{2} R \, g_{ab} = T_{ab}
, \label{EFE}
\end{equation}
on a 4-dimensional spacetime $(M, g)$, with energy-momentum tensor given by
\begin{equation}\label{intro_Tab}
T_{ab}= (\mu+p) u_{a}u_{b}+p g_{ab},
\end{equation}
$\mu$ and $p$ being respectively the energy density and pressure of the fluid and the unit time-like vector field $u_a$ being the fluid's (covariant) 4-velocity. As is well known,
the covariant derivative of $u_a$ can be decomposed as
\begin{equation}\label{uadecomp}
u_{a;b}=\tfrac{1}{3}\theta (g_{ab}+u_a u_b)+\sigma_{ab}+\omega_{ab}-\dot{u}_{a}u_{b},
\label{ucoveq2}
\end{equation}
where $\theta$ is the fluid's (rate of volume) expansion, $\dot{u}_{a}$ is the acceleration and $\sigma_{ab}$, $\omega_{ab}$ are respectively the
shear and vorticity tensors, which are uniquely defined by (\ref{ucoveq2}) and the properties
\begin{equation}
 u^a \dot{u}_{a} = 0,\ u^a\omega_{ab} = \ u^a\sigma_{ab} =0, \ \sigma_{[ab]}=\omega_{(ab)}=0, \ {\sigma^a}_a=0.
\end{equation}
The physical significance of these so called \emph{kinematical quantities} has been discussed by many authors, see for example \cite{Ellis1971}. Among
well known explicit solutions of the Einstein field equations \cite{Kramer} one notes the following \textit{shear-free} ($\sigma_{ab}=0$) ones: the Einstein static
universe ($\theta=\dot u_a=\sigma_{ab}=\omega_{ab}=0$), the FLRW universes ($\theta \neq 0$, $\dot u_a=\sigma_{ab}=\omega_{ab}=0$) and the G\"odel universe ($\theta = \dot u_a = \sigma_{ab}=0$, $\omega_{ab} \neq 0$).
Imposing a barotropic equation of state (EOS) $p = p(\mu)$ leads to extra restrictions on the solution space: for example all barotropic and shear-free ($\sigma_{ab}=0$) perfect fluids with non-vanishing expansion and vanishing vorticity are known explicitly \cite{CollinsKV}. This is not the case for barotropic and shear-free perfect fluids with vanishing expansion and non-vanishing vorticity \cite{Karimian2012}, although also here large classes of solutions
exist (for example all rigidly rotating axisymmetric and stationary perfect fluids belong to this family). Remarkably, barotropic and shear-free perfect fluids in which both expansion and vorticity are non-zero seem to be confined to the limiting situation of `$\Lambda$-type' models (meaning $p=-\mu=constant$), the only example known so far being a Bianchi IX model found by Obukhov et al.~\cite{Obukhov}. This  brings us to the subject of the present paper, the so-called \emph{shear-free fluid conjecture} which claims that
\begin{quote}
general relativistic, shear-free perfect fluids obeying a barotropic EOS such that $p + \mu \neq  0$, are either non-expanding
or non-rotating.
\end{quote}
During the last few years
there has been a renewed interest in this conjecture, which, if true, would be a remarkable consequence of the full Einstein
field equations: on the one hand Newtonian perfect fluids with a barotropic EOS, which are rotating and expanding but non-shearing, are known
to exist \cite{Ellis2011,HeckmannSchucking, Narlikar,SenSopSze}, while on the other hand
in, for example, $f(R)$ gravity there is no counterpart of the conjecture either \cite{Sofuoglu}.
The conjecture is also specific to the Lorentzian signature, since in the Riemannian case the example of the Eguchi-Hanson (Ricci-flat) metric provides us with
a shear-free, expanding and rotating congruence \cite{Pantilie2002}.
\\
The first suggestion that the vanishing of shear
could play a decisively restricting role in the construction of expanding and rotating perfect fluids appeared in 1950, without proof, in a somewhat obscure contribution by
G\"odel \cite{Godel2} on homogeneous rotating cosmological models. A precise formulation of G\"odel's claim was given in 1957 by Sch\"ucking \cite{Schucking}, who gave a short coordinate based
proof that spatially homogeneous dust models ($p=0$) could be either rotating or expanding, but not both.  The condition of vanishing pressure was dropped by
Banerji \cite{Banerji}, who gave a similar coordinate based proof for (tilted) spatially homogeneous perfect fluids obeying a $\gamma$-\textit{law EOS},
$p=(\gamma-1)\mu$, with $\gamma - 1 \neq \frac{1}{9}$ \footnote{the reason why $\gamma - 1 = \frac{1}{9}$ is special was clarified in \cite{NorbertCQG1999}, where also a proof was given for non-spatially homogeneous spacetimes}. Sch\"ucking's result was generalized in 1967 by Ellis \cite{Ellis1}, who used
the orthonormal tetrad formalism to show that the restriction of spatial homogeneity was redundant for dust spacetimes, while in \cite{WhiteCollins} it
was observed that Ellis'
result remained valid in the presence of a cosmological constant. A covariant proof of both results was presented more recently in \cite{Sikhonde_Dunsby}.
In \cite{TreciokasEllis} Treciokas and Ellis proved, again using a combination of an orthonormal tetrad formalism and an adapted choice of coordinates,
that the conjecture held true also for the EOS $p=\frac{1}{3} \mu$, a result which was generalised by Coley \cite{Coley} to allow for a possible non-zero
cosmological constant. In \cite{TreciokasEllis} an outline of an argument was presented, indicating the validity of the conjecture for perfect fluids
in which the \textit{acceleration potential} $r = \exp \int_{p_0}^p \frac{1}{\mu+p} \ud p$ satisfies an equation of the form $\dot{r}= \beta(r)$, where the \textit{dot-operator} is the derivative along the fluid 4-velocity. This result
(which implies the validity of the
conjecture for a general EOS, once one additionally assumes spatial homogeneity, as was the case in \cite{Banerji, KingEllis, Whiteth}) will play a key role in the sequel.
However the details of the underlying proof remained veiled until 1988, when Lang and Collins \cite{Lang,Langth} explicitly showed that $\omega \theta=0$ indeed follows,
provided there exists a functional relation of the form $\theta=\theta(\mu)$ (which, by the conservation law $\dot{\mu} + (\mu+p) \theta=0$, is equivalent with $\dot{r}=\beta (r)$).
A `covariant' proof
of this same result was given by Sopuerta in \cite{Sopuerta1998}.
While Treciokas and Ellis already questioned the possible existence of rotating and expanding perfect fluids with
$p=p(\mu)$, their non-existence was explicitly conjectured by Collins \cite{CollinsKV}, following a series of papers in which the conjecture was proved successively for the cases where
the vorticity vector is parallel to the acceleration (see \cite{WhiteCollins}, or \cite{SenSopSze,Sikhonde_Dunsby,Sikhonde_thesis} for a fully covariant proof), or in which the Weyl tensor is purely magnetic \cite{Collins1984} or purely electric \cite{CyganowskiCarminati, Langth}.

Since then this so-called '\textit{shear-free perfect fluid conjecture}' has been proved also in a large number of special cases, such as
$\ud p/\ud \mu = -\frac{1}{3}$ \cite{CyganowskiCarminati, Langth, slob, NorbertCQG1999}; $\theta=\theta(\omega)$ \cite{Sopuerta1998};
Petrov types N \cite{Carminati1990} and III \cite{CarminatiCyganowski1996,CarminatiCyganowski1997}; the existence of a conformal Killing vector parallel to the fluid flow \cite{Coley}; the Weyl tensor having either a divergence-free electric part \cite{NorbertKarimianCarminatiHuf2012}, or a divergence-free magnetic
part, in combination with an EOS which is of the $\gamma$-law type \cite{NorbertCarminatiKarimian2007} or which is sufficiently
generic \cite{CarminatiKarimianNorbertVu2009}, and in the case where the Einstein field equations are linearised
about a FLRW background \cite{Nzioki} . 
A major step was achieved by Slobodeanu \cite{Slobodeanu2014} proving the conjecture for an arbitrary $\gamma$-law EOS
(except for the cases $\gamma-1 = -\tfrac{1}{5}, -\tfrac{1}{6},-\tfrac{1}{11},-\tfrac{1}{21},\tfrac{1}{15}, \tfrac{1}{4}$) and a vanishing cosmological constant. In this approach the Einstein field equations were seen as a second order differential system in the length scale function,
with the integrability conditions for the system allowing one
to prove the conjecture \emph{via} some sufficient conditions in terms of \emph{basic functions}, i.e.~functions
that are constant along the fluid flow. The
results of \cite{Slobodeanu2014} were generalzed in \cite{VdB-Radu2016}, covering the exceptional values of $\gamma$, while allowing also for a non-zero cosmological
constant, or, equivalently, generalising the EOS to the form $ p=(\gamma -1) \mu + p_0$, a so called \emph{linear EOS}.
\footnote{Inclusion of the cosmological constant is non-trivial only in the case of a $\gamma$-law EOS, while for a general EOS it can always be absorbed in the definitions of pressure and matter density.}
In \cite{VdB-Radu2016} the formalism of \cite{Slobodeanu2014} was adapted to a general EOS and some theorems were presented, which not only played a key role in the proof for a linear EOS, but which may also be useful when tackling
the conjecture in its full generality, when $p=p(\mu)\neq-\mu$ is an arbitrary function of the matter density. These
theorems tell us that,
provided certain algebraic restrictions are obeyed by the kinematical quantities, the conjecture is valid or that, if it does \emph{not} hold (and under different conditions), there exists a Killing vector along the vorticity. In the latter
case the equations describing the problem simplify dramatically, but the accompanying loss of information turns this sub-case, as remarked already by
Collins \cite{CollinsKV}, into an exceptionally elusive one. The simplest of these
criteria says that, for an expanding and rotating shear-free perfect fluid obeying a barotropic equation of state, the existence of a Killing vector along the vorticity is equivalent with the acceleration being orthogonal with the vorticity. A covariant proof of this result was presented recently in \cite{Sikhonde_thesis}, generalizing it furthermore to the property that an expanding and rotating barotropic shear-free perfect fluid admits a Killing vector along the vorticity, if the projection of the acceleration in the plane orthogonal to the vorticity has components $\u_1=\mathcal{F} \mathfrak{B}_1$ and $\u_2=\mathcal{F} \mathfrak{B}_2$, with $\mathcal{F}$ a function of the matter density $\mu$ and $\mathfrak{B}_1,\mathfrak{B}_2$ basic functions.\\

In the above proofs one generally starts with the assumption that an expanding and rotating barotropic shear-free perfect fluid exists and one tries to arrive at a contradiction leading to `$\omega \theta = 0$`. Needless to say, any calculational or conceptual error in this procedure is prone to lead to a false proof: examples are \cite{Carminati2015} (see \cite{VdB-Radu2016}) and  \cite{Goswami-Ellis} (see \cite{VdBCarminati2021}). The fact that most recent proofs require a fair amount of symbolic algebra (using Maple or Mathematica, together with some -usually not readily available- special purpose software), makes the spotting of errors quite difficult.

In order to arrive at the above contradiction two main approaches have been used (besides the early coordinate based methods), (a) the tetrad approach and (b) the covariant one. Tetrad approaches (using either an orthonormal or null-tetrad) have been intensively used, starting with Ellis's work on dust space-times \cite{Ellis1} and culminating in the introduction and efficient use of \emph{basic} quantities in \cite{Slobodeanu2014, VdB-Radu2016}\footnote{recent work by Carminati, using the Maple package STeM, is proceeding along these same lines and looks very promising (private communication)}.
As the vorticity vector $\w{\omega}$ and the velocity vector $\w{u}$ are orthogonal, it is natural to choose an orthonomal tetrad in which $\w{u}$ and $\w{\omega}$ fix the time-like basis vector $\w{e}_0$ together with one of the space-like basis vectors, say $\w{e}_3$ (in the case of a null-tetrad this would then fix the $(\w{k},\w{\ell})$ plane). Fixing the remaining basis vectors $\w{e}_1,\w{e}_2$ is a matter of taste and is often not necessary (or not even possible, if a partial isotropy would be present). This situation is parallelled in the covariant approaches, where also in the so-called `fully covariant' ones \cite{SenSopSze, Sopuerta1998} all essential information appears to have been obtained by projecting the relevant covariant equations on $\w{u}$, $\w{\omega}$ and (in \cite{Sopuerta1998}) on the spatial vector $\w{\omega} \times (\w{\u} \times \w{\omega})$.

A fully covariant approach appearing to be somewhat of an overhead, it looks more natural to use a 1+1+2 covariant formalism \cite{Clarkson2007} in which the 3-space orthogonal to $\w{u}$ is further decomposed w.r.t.~the vorticity vector. A further argument (which also applies to the 1+1+2 approach) for stepping away from full covariance is that it appears awkward then to split the equations in their \emph{basic} and \emph{non-basic} parts\footnote{for attempts in this direction, see \cite{Sikhonde_thesis}}. Therefore there seems to be  some advantage in introducing a `\emph{basic SO2 formalism}', in which all equations are invariant under \emph{basic rotations} (being rotations for which the rotation angle $\alpha$ satisfies $\w{e}_0(\alpha)$=0). Absorbing the (under this sub-group) `badly transforming' rotation coefficients in new (tetrad-)derivative operators, it follows that all variables, together with their derivatives, become explicitly covariant under basic rotations. A further advantage of such a formalism is that, allowing only basic rotations, the tetrad still can be chosen to be co-rotating with the fluid, a choice which has proven its usefulness since the early tetrad approaches. The whole procedure is very reminiscent of the way in which, starting from the Newman-Penrose formalism, the Geroch-Held-Penrose formalism \cite{GHP} was constructed, or the ortho-complex-null (OCN) formalism \cite{Wylleman_thesis} (see also \cite{Wylleman_NV_2007,Wylleman_ConfProc2009,Wylleman_MG2012}) from the 1+1+2 (tetrad) formalism, but now enabling one to deal in an efficient way with the concept of \emph{basic} quantities. It remains to be seen whether the formalism may be adapted to be useful also for the study of more general (shearing) perfect fluids.

I begin with introducing in section \ref{sectionconventions} the necessary notations and conventions, while in section \ref{section_recaptheorems} I give an overview of old and recent results in terms of the new formalism.
Some new results are discussed in sections \ref{section_Muzitheorem1} and \ref{section_app2}.

\section{Notations and fundamental equations}\label{sectionconventions}

I first recapitulate the main results of \cite{VdB-Radu2016} by introducing at each point of spacetime an orthonormal tetrad $(\w{e}_a)=(\w{e}_0, \w{e}_\alpha)$ with
the time-like unit vector $\w{e}_0$
coinciding with the fluid 4-velocity $\w{u}$ (henceforth
Latin indices are tetrad indices taking the values 0,1,2,3, while Greek indices are spatial triad indices taking the values 1, 2, 3).
Boldface symbols always refer to vector (tensor) fields, but for readability (and as is customary in the literature, see e.g.~\cite{EllisMaartensmacCallum2012}) we
will also write
$\w{e}_a=\partial_a$: for example $\w{u}=\partial_0$, $\dot{\w{u}}=\dot{u}^\alpha \partial_\alpha$, $\dot{\w{u}}^2 = \dot{u}_\alpha \dot{u}^\alpha$ etc. ... \\
The volume 4-form components will be denoted by $\eta_{abcd}$ with the convention $\eta_{0123}=-1$; its restriction to tangent hyperplanes orthogonal
to $\w{u}$ is $\varepsilon_{\alpha\beta\gamma}$.
To a space-like 2-form one associates a vector field by Hodge duality, e.g.~the vorticity vector
$\w{\omega}$ has components $\omega_\alpha = \tfrac{1}{2}\varepsilon_{\alpha\beta\gamma}\omega^{\beta\gamma}$.  The notation $\omega$ will stand for the norm of the vorticity vector.

The metric components are $(g_{ab})=\mathrm{diag}(-1,1,1,1)$ and the Riemann and Ricci curvature tensors respectively satisfy
\begin{equation}\label{ricciv}
{v^a}_{;d;c}-{v^a}_{;c;d} = {R^a}_{bcd}v^b \ , \quad  R_{ab}={R^m}_{amb},
\end{equation}
while the 'trace-free part' of the curvature, given by the Weyl tensor, is
\begin{equation}\label{Weyldef}
C_{abcd}=R_{abcd} - (g_{a[c}R_{d]b}+g_{b[d}R_{c]a})+\tfrac{1}{3}R \, g_{a[c}g_{d]b}.
\end{equation}

The starting point is an \emph{extended orthonormal tetrad formalism} \cite{Norbert2013}, see also \cite{EllisMaartensmacCallum2012, MacCallum1971}, in which the
\textit{main variables} are
\begin{itemize}
\item the tetrad basis vectors $\partial_a$,

\item the kinematical quantities $\dot{u}_\alpha$, $\omega_\alpha$, $\theta$, $\sigma_{\alpha \beta}$,

\item the local angular velocity $\Omega_\alpha$ of the triad $\partial_{\alpha}$ with respect to a set of Fermi-propagated axes and the Kundt-Sch\"ucking-Behr variables \cite{MacCallum1971} $a_{\alpha}$ and $n_{\alpha \beta}=n_{\beta \alpha}$ which parametrize the purely spatial commutation coefficients ${{\gamma}^\alpha}_{\beta\kappa}$. They are defined by the relations\footnote{Note that in the definitions below (following MacCallum \cite{MacCallum1971}) the sign of $\Omega_\alpha$ is opposite to that in \cite{EllisMaartensmacCallum2012} }
\begin{eqnarray}
{}[\partial_0,\partial_\alpha] &=& \dot{u}_\alpha \partial_0 - \left(\tfrac{1}{3} \theta\delta_\alpha^{\beta}+\sigma_\alpha^{\beta}
+{\varepsilon^\beta}_{\alpha\gamma}(\omega^\gamma+\Omega^\gamma)\right) \partial_\beta~, \label{comm_explicit} \\
{}[\partial_\alpha, \partial_\beta ] &=& {{\gamma}^c}_{\alpha\beta}\partial_c\equiv -2\varepsilon_{\alpha\beta\gamma}\omega^\gamma\partial_0 + \left(2a_{[\alpha}\delta^\gamma_{\beta]}+
\varepsilon_{\alpha\beta\delta} n^{\delta\gamma}\right)\partial_\gamma~.\label{comm_explicit_bis}
\end{eqnarray}
It is computationally advantageous however to replace $a_{\alpha}$ and $n_{\alpha\beta}$ $(\alpha\neq\beta)$ with new variables $q_{\alpha}$ and $r_{\alpha}$ defined by
\begin{equation*}
n_{\alpha-1 \,\alpha+1}=(r_{\alpha}+q_{\alpha})/2, \quad a_{\alpha}=(r_{\alpha
}-q_{\alpha})/2.
\end{equation*}

\item the energy density $\mu$ and pressure $p$,

\item the `electric' and `magnetic' parts $E_{\alpha \beta}$, $H_{\alpha \beta}$ of the Weyl tensor with respect to $\w{u}$:
\begin{equation}
 E_{ab}= C_{acbd}u^c u^d , \quad
 H_{ab}= \frac{1}{2} \eta_{amcd} {C^{cd}}_{bn}u^m u^n, \label{EHdef}
\end{equation}
which are symmetric trace-free tensors determining the Weyl curvature.
\end{itemize}
In addition, I shall use the following \textit{auxiliary variables}: the spatial gradient of the expansion scalar, $z_\alpha=\partial_\alpha
\theta$, and the (covariant) divergence of the acceleration, $ j\equiv {\dot{u}^a}_{;a} =
\partial_{\alpha}\dot{u}^{\alpha}+\dot{u}^{\alpha}\dot{u}_{\alpha}-2 \dot{u}^{\alpha} a_{\alpha}$.\\

Note that with this choice of variables, once one assumes the Einstein equations (\ref{EFE}) to be satisfied, the Riemann tensor is
actually \emph{defined} in terms of $\w{E},\w{H},p$ and $\mu$ via (\ref{Weyldef}, \ref{EHdef}) \footnote{for example $R_{1212}=\tfrac{1}{3}
\mu - E_{33}$}, with the symmetry
and trace-free properties of $\w{E}$ and $\w{H}$ guaranteeing the usual symmetry properties of a curvature tensor.
The usual defining formulae (obtained from the second Cartan structure equations or, equivalently, from (\ref{ricciv})),
\begin{equation}\label{cartan2bis}
{R^a}_{bcd}={\Gamma ^a}_{bd,c}-{\Gamma ^a}_{bc,d}+{\Gamma ^e}_{bd}
{\Gamma ^a}_{ec} - {\Gamma ^e}_{bc} {\Gamma ^a}_{ed} - {{\gamma}^e}_{cd} {\Gamma ^a}_{be}~,
\end{equation}
become then a set of first order partial differential \emph{equations} in the connection coefficients
${\Gamma ^c}_{ab}$, which are related to the main variables of the formalism through the commutation coefficients:
\begin{equation}\label{commdef2}
\Gamma_{\ ab}^c = \tfrac{1}{2}\left(\gamma_{\ ba}^c + \gamma_{\ cb}^a - \gamma_{\ ac}^b \right).
\end{equation}

This set of equations (\ref{cartan2bis}) is automatically satisfied \cite{Norbert2013} if one takes as governing equations
the following system:
\begin{enumerate}[i)]
\item the Einstein field equations (\ref{EFE}),
\item the Jacobi equations $\left[\partial_{[a},\left[\partial_b,\partial_{c]}\right]\right]=0$, or
\begin{equation}\label{Jacobibis}
\partial_{[a}{{\gamma}^d}_{bc]}-{{\gamma}^d}_{e[a} {{\gamma}^e}_{bc]} =0~,
\end{equation}
\item 18\footnote{three of which are identities under the Jacobi equations} Ricci equations
${u^a}_{;d;c}-{u^a}_{;c;d}= {R^a}_{0cd}$ and
\item 20 Bianchi equations $R^a{}_{b[cd;e]} = 0$,
\end{enumerate}
where the $R_{ab}$ components in $(i)$ are replaced, via (\ref{cartan2bis}), in terms of
commutation coefficients ${{\gamma}^a}_{bc}$ and their derivatives.

\paragraph{Tetrad conventions.} It has become customary \cite{WhiteCollins} to
align $\partial_{3}$ with $\w{\omega}$, such that $\w{\omega} =\omega\partial_{3}\neq 0$.
Applying the commutators $[\partial_3, \partial_\alpha]$ to $p$ and using the Euler and Jacobi equations, one can show that the spatial triad can be taken to be co-rotating: $\w{\Omega}+\w{\omega}=0$, with
the remaining tetrad freedom consisting of rotations in the $(1,2)$ plane,
\begin{equation}\label{basicrotations}
\partial_1+ i\,\partial_2 \to e^{i\alpha} (\partial_1+i\,\partial_2),\textrm{ with } \partial_0\alpha =0.
\end{equation}
In accordance with the definition of basic variables (see section 3), I will call such transformations \emph{basic rotations}. Notice that, under $\partial_1+i\,\partial_2 \to e^{i\alpha} (\partial_1+ i\,\partial_2)$,
\begin{equation*}
\tfrac{1}{2}(n_{11}-n_{22})+ i \, n_{12} \longrightarrow e^{2 i \alpha} \left(\tfrac{1}{2}(n_{11}-n_{22})+ i \, n_{12}\right),
 \end{equation*}
while under $\sigma_{ab}=0$ and $\w{\Omega}+\w{\omega}=0$ the evolution equations for $n_{11} -n_{22}$ and $n_{12}$ are identical: it therefore follows that it is \emph{possible} to specialize the tetrad by means of a basic rotation so as to
achieve $n_{11}=n_{22}\equiv n$ (which would fix the tetrad, unless
\begin{equation}\label{extra_rot}
n_{12}=n_{11}-n_{22}=0,
\end{equation}
in which case further basic rotations can be used to obtain extra simplifications). Although in \cite{VdB-Radu2016} the choice $n_{11}-n_{22}=0$ was made throughout, I will purposely refrain from this here, in order to obtain a rotation invariant formalism.

\paragraph{Conventions related to the EOS.} Throughout I assume $p=p(\mu)$ with $p' :=\ud p / \ud \mu \neq -1$, excluding thereby the $\Lambda$-type models. I also restrict the formalism to the situation where $p' \neq 0 \neq p' + \tfrac{1}{3}$, as denominators of this form will appear throughout in order to simplify the equations. For the exceptional cases $p'=0$ or $p' + \tfrac{1}{3}=0$ the non-existence of shearfree rotating and expanding perfect fluids has been demonstrated by several authors(see the introduction).
I also adopt the notations:
\begin{eqnarray}\label{EOSconventions}
G &\equiv&  \frac{p''}{p'}(p+\mu) -p'+\frac{1}{3}, \\
G' &=& \ud G / \ud \mu, \ G_p=G'/p',\\
\lambda &=& \exp \int \frac{\ud \mu}{3(p+\mu)}.
\end{eqnarray}

Imposing in the extended tetrad formalism the existence of a barotropic equation of state $p=p(\mu)$, as well as the vanishing of the shear, results in new chains of integrability conditions. The procedure of building up the sequence of integrability conditions has been
carried out in several papers and for details of their derivation I refer the reader for example to
\cite{NorbertKarimianCarminatiHuf2012} or \cite{VdB-Radu2016}; see also \cite{Norbert2013}, or \cite{MaartensBassett1998} for the compact `1+3 covariant form' of some of these equations.

\section{Formulation in terms of basic variables}
An important role in recent proofs has been played by so-called basic objects (cf.~\cite{Slobodeanu2014} and references therein).
Let $\mathcal{H}$ denote the space-like subspace of the tangent space, orthogonal to the velocity $\w{u}$. Indicating the component along $\mathcal{H}$ or the restriction to $\mathcal{H}$ by a superscript, a tensorial object $\varsigma$ in $(\otimes^r \mathcal{H}) \otimes (\otimes^s \mathcal{H}^*)$ is called \emph{basic} if $(\mathcal{L}_{\w{u}} \varsigma)^\mathcal{H} =0$, $\mathcal{L}$ denoting the Lie derivative. In particular,
\begin{de}
A function $f$ on $M$ is \emph{basic} if it is conserved along the flow, $\w{u}(f)=0$, and a vector field $\w{X}$ belonging to $\mathcal{H}$ is \emph{basic} if $[\w{u}, \w{X}]^\mathcal{H}=0$.
\end{de}
Some immediate properties of basic functions are:\\

\noindent $(i)$ A linear combination of basic vector fields, with basic coefficient functions, is basic.\\

\noindent $(ii)$ The horizontal part of the commutator of two basic vector fields is basic.\\

\noindent $(iii)$ If $\w{X}$ is a basic vector field and $f$ a basic function on $M$, then $\w{X}(f)$ is a basic function on $M$.\\

I now introduce rescaled acceleration variables $\U_\alpha= \dot{u}_\alpha/(\lambda p^\prime)$ and a set of functions $\mathcal{S}= \{\O, N_{\alpha \alpha}, \b_\alpha, \Q_\alpha, \R_\alpha,\J,\E_0,\E_3, \E_{AB}\}$ ($\alpha=1,2,3$ and $A,B=1,2$), defined by

\begin{eqnarray}
& \O = \frac{p+\mu}{\lambda^{5}}\,\omega, \label{convert_O}\\
& N_{\alpha \alpha} = \frac{n_{\alpha \alpha}}{\lambda},\\
& \b_1 = -\tfrac{4}{3}\,{\frac {p+\mu}{{\lambda}^{6}}}z_1 - \tfrac{2}{3}\,{\frac
{\left(9p^\prime - 1 \right) (p+\mu)}{{\lambda}^
{5}}}\omega \U_2, \label{convert_b1}\\
& \b_2 = -\tfrac{4}{3}\,\frac {p+\mu}{\lambda^6}z_2 +
\tfrac{2}{3}\,{\frac {\left(9p^\prime - 1 \right) (p+\mu)}{{\lambda}^{5}}} \omega \U_1, \label{convert_b2}\\
& \b_3=-\tfrac{4}{3}\,{\frac {p+\mu}{{\lambda}^{6}}}z_3, \label{convert_b3}\\
& \Q_1 = -\tfrac{1}{3}\,\U_1+ \frac {q_1}{\lambda}, \quad
\R_1=\tfrac{1}{3}\,\U_1 + \frac {r_1}{\lambda}, \label{convert_Q1R1}\\
&\Q_2 = -\tfrac{1}{3}\,\U_2 + \frac{q_2}{\lambda},\quad
\R_2=\tfrac{1}{3}\,\U_2+\frac {r_2}{\lambda}, \label{convert_Q2R2}\\
&{\Q_3}+{\R_3}=-\tfrac{1}{3}\,\U_3 +\frac {q_3}{\lambda}, \quad
{\Q_3}-{\R_3}=\tfrac{1}{3}\,\U_3+\frac{r_3}{\lambda}, \label{convert_Q3R3}\\
& \J=(1-2 G)(\U_1^2+\U_2^2+\U_3^2)+\lambda^{-2}\left(\theta^2-3 \mu-2\frac{j}{p'}\right)+9\O^2\frac{\lambda^8}{(p+\mu)^2}, \label{convert_J}\\
& \E_{\alpha \beta} = \frac{3p' + 1}{\lambda^2 p'}E_{\alpha \beta} + G \U_\alpha\U_\beta, \quad (\alpha,\beta)=(1,2),\,
(1,3),\, (2,3), \label{convert_E123}\\
& \E_0 = \frac{3p' + 1}{\lambda^2 p'}(E_{11}-E_{22}) + G (\U_1^2-\U_2^2) ,\label{convert_E0}\\
& \E_3 = \frac{3p' + 1}{\lambda^2 p'}E_{33}-\frac{G}{3} (\U_1^2+\U_2^2-2\U_3^2)+\frac{2(9p'+1)\O^2}{3p'}\frac{\lambda^8}{(p+\mu)^2} .\label{convert_E3}
\end{eqnarray}

With the exception of $N_{\alpha \alpha}$ these are identical with the definitions in \cite{VdB-Radu2016}, where it was shown that $\mathcal{S}$ was a basic set (although this was proven in \cite{VdB-Radu2016} with the gauge choice $n_{11}-n_{22}=0$, the transformation properties of the right hand sides of the above expressions show that this remains valid after applying an arbitrary basic rotation). The same argument shows that also the $N_{\alpha \alpha}$ are basic (which, alternatively, can be seen to be an immediate consequence of the evolution equations for $n_{\alpha \alpha}$).

It will be convenient also to rewrite the spatial basis in terms of the basic vector fields
\begin{equation}\label{basxyz}
\X=\lambda^{-1}\partial_1, \quad \Y = \lambda^{-1}\partial_2, \quad \Z = \lambda^{-1}\partial_3.
\end{equation}
It follows then that
$\U_1= -3 \X(\ln \lambda)$, $\U_2= -3 \Y(\ln \lambda)$, and $\U_3= -3 \Z(\ln \lambda)$.

Translating the equations of Appendix 1 of \cite{VdB-Radu2016} in terms of the basic set $\mathcal{S}$
and the non-basic variables $p, \mu, \theta, \U_{\alpha}$, augmented with all information obtainable by acting with the $\partial_\alpha$ operators
on the remaining variables of the system, a set of equations was obtained which could be split
into
\begin{itemize}
\item evolution equations for the non-basic quantities $\mu, \theta, \U_\alpha$,

\item purely basic equations, presented in Appendix 2 of \cite{VdB-Radu2016},

\item algebraic equations in $\dot{U}_\alpha$ and $\theta$.
\end{itemize}
Starting from these equations a number of  theorems was derived in \cite{VdB-Radu2016}, which I review below.

\section{Recapitulation of some general theorems}\label{section_recaptheorems}

\begin{te} \label{th1}\cite{VdB-Radu2016}
If for a shear-free perfect fluid obeying a barotropic EOS $\U_1$ and $\U_2$ are basic,
then $\omega \theta = 0$.
\end{te}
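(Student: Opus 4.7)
My plan is to exploit the $\partial_0$-evolution equations for $\U_1$ and $\U_2$, which in the formalism of \cite{VdB-Radu2016} take the form
\[
\partial_0 \U_A = p'\theta\,\U_A + \frac{\lambda^5}{p+\mu}\bigl(-\tfrac{3}{4}\b_A \pm \tfrac{1}{2}(9p'-1)\,\O\,\U_B\bigr),
\]
with $(A,B)=(1,2),(2,1)$ and sign $-$ for $A=1$, $+$ for $A=2$. Since by hypothesis $\U_1$ and $\U_2$ are basic, the left-hand sides vanish, yielding two purely algebraic relations coupling the basic quantities $\U_A,\b_A,\O$ to the a priori non-basic factor $\psi := p'\theta(p+\mu)\lambda^{-5}$.

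The first key step is to eliminate $\psi$. Multiplying the $A=1$ relation by $\U_2$, the $A=2$ relation by $\U_1$ and subtracting produces the identity
\[
3(\b_1 \U_2 - \b_2 \U_1) + 2(9p'-1)\,\O\,(\U_1^2+\U_2^2) = 0,
\]
which contains only basic quantities apart from the coefficient $9p'-1$. I then argue by contradiction, assuming $\omega\theta\neq 0$, so in particular $\O\neq 0$ and $\theta\neq 0$. In the subcase $\U_1^2+\U_2^2\neq 0$ this identity expresses $9p'-1$ as a ratio of basic quantities, forcing $p'$ itself to be basic; but $\partial_0 p' = -p''(p+\mu)\theta$, so with $p+\mu\neq 0$ and $\theta\neq 0$ one would have $p''=0$, contradicting the non-linearity assumption (\ref{notlinear}).

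In the complementary subcase $\U_1=\U_2=0$ the acceleration lies along the vorticity vector $\w{\omega}$, which is the classical parallel-acceleration configuration in which the conjecture has already been established (\cite{WhiteCollins, SenSopSze, Sikhonde_Dunsby, Sikhonde_thesis}). Substituting $\U_1=\U_2=0$ back into the two algebraic relations incidentally also forces $\b_1=\b_2=0$, and through (\ref{convert_b1})--(\ref{convert_b2}) this reduces to $z_1=z_2=0$, so the present tetrad exactly fits the hypotheses of those earlier proofs. Either subcase therefore gives $\omega\theta=0$.

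The main obstacle I anticipate is less the algebra above than the bookkeeping of basicness and of open-set versus pointwise validity: one must split the argument cleanly according to the (generically non-vanishing but possibly zero) factor $\O(\U_1^2+\U_2^2)$, rule out pathological configurations where this factor vanishes on a proper subset, and make the appeal to the parallel-acceleration result formally airtight within the present basic-variable framework rather than merely morally so.
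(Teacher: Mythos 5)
Your proof is correct, and its key identity is in substance the same one the paper relies on: $3(\b_1\U_2-\b_2\U_1)+2(9p'-1)\O(\U_1^2+\U_2^2)=0$ is (up to a constant factor) the real part of (\ref{z_u_relation}) multiplied by $\uc$ after substituting $z=-p'\theta\u$, i.e.\ exactly the relation from which the proof of the generalisation in section \ref{section_Muzitheorem1} extracts ``$f=9p'-1$ modulo a constant''; for $f=1$ that relation says $9p'-1$ is basic, hence constant when $\theta\neq0$, contradicting (\ref{notlinear}). What you do differently, and what makes your argument cleaner for this special case, is that you obtain the relation by taking the antisymmetric combination of the two real evolution equations $\partial_0\U_1=\partial_0\U_2=0$ rather than by solving for $\theta$ and taking a real part: you therefore never divide by $F=p'$, so the separate subcase $F=0$ of section \ref{section_Muzitheorem1} does not arise, and none of the $S$-function machinery is needed (it is only required there because $f$ is a non-constant function of $\mu$, in which case the real part yields $f\propto 9p'-1$ instead of an immediate contradiction). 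The loose ends are exactly the ones you flag yourself, and they are all benign: the linear-EOS endpoint of your contradiction is not vacuous but is disposed of by Theorem \ref{th4}; the usual open-set bookkeeping is needed where $\O(\U_1^2+\U_2^2)$ changes character; and the residual case $\U_1=\U_2=0$ is the parallel configuration, for which the paper supplies a self-contained proof in section \ref{section_CWtheorem1} (where, just as you observe, $z=0$ follows from (\ref{e0u})), so the appeal to the earlier literature can be replaced by an internal reference.
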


A special case is of course $\U_1=\U_2=0$, i.e.~the well known situation in which the acceleration and vorticity vectors are parallel, see  \cite{WhiteCollins}, or \cite{SenSopSze,Sikhonde_Dunsby,Sikhonde_thesis} for a covariant proof.

Recently this was generalized to the following

\begin{te}\label{Muzi_th1}\cite{Sikhonde_thesis}
If for a shear-free perfect fluid obeying a barotropic EOS $\U_1$ and $\U_2$ are of the form,
\begin{equation}
\U_1=f \mathfrak{B}_1, \ \U_2=f \mathfrak{B}_2,
\end{equation}
with $f$ a function of the matter density $\mu$ and $\mathfrak{B}_1,\mathfrak{B}_2$ basic functions,
then $\omega \theta = 0$.
\end{te}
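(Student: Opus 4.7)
The plan is to reduce to Theorem \ref{th1} by specialising the tetrad and then exploiting the prescribed $\mu$-dependence of $f$. Since $\mathfrak{B}_2/\mathfrak{B}_1=\U_2/\U_1$ is a ratio of basic functions and therefore basic, the rotation angle $\alpha$ defined by $\tan\alpha=\mathfrak{B}_2/\mathfrak{B}_1$ is itself basic and induces an admissible basic rotation (\ref{basicrotations}). After this rotation we may assume $\U_2=0$ and $\U_1=f(\mu)\,\mathfrak{B}$, with $\mathfrak{B}=(\mathfrak{B}_1^2+\mathfrak{B}_2^2)^{1/2}$ still basic. The subcase $\mathfrak{B}\equiv 0$, corresponding to $\U_1=\U_2=0$ (acceleration parallel to vorticity), is already covered by Theorem \ref{th1}, so I assume $\mathfrak{B}\neq 0$.

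Next I would exploit the evolution equation for $\U_2$ in the extended tetrad formalism, which takes the form
\[
\partial_0 \U_2 = p'\theta\U_2 + \tfrac{\lambda^5}{p+\mu}\bigl[-\tfrac{3}{4}\b_2 + \tfrac{1}{2}(9p'-1)\O\U_1\bigr].
\]
With $\U_2\equiv 0$ this collapses to the algebraic identity $\b_2 = \tfrac{2}{3}(9p'-1)\,f\,\O\,\mathfrak{B}$. Applying $\partial_0$, using $\partial_0\b_2=\partial_0\O=\partial_0\mathfrak{B}=0$ (basicity) and $\partial_0\mu=-(p+\mu)\theta$, gives
\[
\bigl((9p'-1)f\bigr)'(\mu)\,(p+\mu)\,\theta\,\O\,\mathfrak{B}=0.
\]
Under the assumption $\omega\theta\neq 0$ (hence $\O\neq 0$) this forces $(9p'-1)f=c$ with $c\neq 0$, valid away from the exceptional value $p'=1/9$; the function $f$ is thus pinned down to $f(\mu)=c/(9p'-1)$ up to a single constant.

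An entirely analogous manipulation with $\U_1$ is then carried out: basicity of $\mathfrak{B}$ gives $\partial_0\U_1 = -(f'/f)(p+\mu)\theta\U_1$, which combined with the $\U_1$-evolution equation of the formalism expresses $\b_1$ as a $\mu$-dependent multiple of $\theta\U_1$. Demanding $\partial_0\b_1=0$ then pins $\partial_0\theta$ to a specific quadratic form $\Phi(\mu)\theta^2$. Substituting this into the $\partial_0\theta$-evolution equation (which involves $\U_3^2$, $\J$, $\O^2$, $\mu$ and $p$) and separating the basic and non-basic parts produces further algebraic constraints on $\U_3$, $\J$ and $\O$. Propagating these through the basic equations of Appendix~2 of \cite{VdB-Radu2016}, one expects to recover a contradiction with $\omega\theta\neq 0$, in exactly the same spirit as in the proof of Theorem \ref{th1}.

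The main obstacle lies in this last step. The form $f=c/(9p'-1)$ is rigid but not structureless, and one has to verify that the resulting integrability chain admits no new expanding-and-rotating solutions beyond those already excluded — in particular those associated with the existence of a Killing vector along the vorticity \cite{VdB-Radu2016, Sikhonde_thesis}. The work is essentially algorithmic symbolic algebra, but technically delicate because the coupling between $f$ and the equation of state via $(9p'-1)$ reintroduces a $\mu$-dependent factor at every subsequent differentiation, forcing one to keep careful track of the basic/non-basic splitting at each stage.
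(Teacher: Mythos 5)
Your opening moves are sound and in fact reproduce the backbone of the paper's own argument, transplanted into a rotated gauge: the basic rotation killing $\U_2$ is admissible (the angle is basic), and the resulting first integral $(9p'-1)f=\mathrm{const}$ is correct --- it is exactly what one obtains by dividing the basic relation (\ref{z_u_relation}) by $\mathfrak{B}$ and propagating its real part along $\w{u}$, and it is consistent with what the paper finds (its case (i) has $(9p'-1)f=1$; in case (ii) the paper records $f=9p'-1$, which differs from your $c/(9p'-1)$ only through a sign slip in its definition of $F$, the subsequent formula $F=p'(2-9G)/(9p'-1)$ coming out the same either way). Likewise your expression of $\b_1$ as a $\mu$-dependent multiple of $\theta\mathfrak{B}$ and the consequent $\Theta=\Phi(\mu)\theta^2$ are the paper's (\ref{app1_2}), (\ref{app1_3}) and its displayed formula for $\Theta$. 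Note, however, that at this stage you have \emph{not} reduced to Theorem \ref{th1}: $f=c/(9p'-1)$ is a genuinely non-constant function of $\mu$ (a linear EOS being excluded by hypothesis), so $\U_1$ is not basic and Theorem \ref{th1} never applies.

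The proof is then left unfinished at precisely the essential point. ``Propagating these through the basic equations, one expects to recover a contradiction'' is a hope, not an argument, and you yourself flag it as the main obstacle, including the spectre of the Killing-vector subcase. The paper closes the loop with a specific device that is absent from your proposal: the scalar $S=\log\theta-\int F(\mu+p)^{-1}\,\ud\mu$, which by (\ref{app1_1}) satisfies $\et(S)=\etd(S)=0$ identically, so that the whole problem collapses to proving the single statement $\ZZ(S)=0$. That statement is extracted from the $[\partial_0,\,\et]S$ and $[\partial_0,\,\ZZ]S$ commutators, which yield (\ref{app1_eos1}) and the relation (\ref{app1_eos2}), the latter factorizing as an EOS-dependent polynomial times $\ZZ(S)$, with the vanishing of that polynomial shown to force a linear EOS. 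Once $\ZZ(S)=0$, the gradient of $S$ is parallel to the 4-velocity, so either $S$ is constant --- whence $\theta=\theta(\mu)$ and $\omega\theta=0$ by Lang--Collins \cite{Lang,Langth} --- or $\w{u}$ is hypersurface-orthogonal and $\omega=0$; no open-ended integrability chain and no separate disposal of the Killing-vector branch is required. A second, smaller omission: your pinning of $\Theta$ presupposes $F\neq0$; when $F=0$ one has $\b_1\equiv0$ and $\partial_0\b_1=0$ is vacuous. This degenerate subcase (equivalent to $9G=2$ once $(9p'-1)f$ is constant) must be, and in the paper is, treated separately as case (i), where it is shown to give $\beta=0$ and hence to reduce to the parallel case.
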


In section \ref{section_Muzitheorem1} a shorter proof of this result will be provided, once the \emph{basic SO2 formalism} has been set up.

\begin{te} \label{th2}\cite{VdB-Radu2016}
If for a rotating and expanding shear-free perfect fluid, obeying a barotropic EOS, $\U_3$ is basic, then a Killing vector along the vorticity exists.
\end{te}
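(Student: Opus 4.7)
The plan is to show that in the rotating, expanding, barotropic shear-free setting the hypothesis $\w{u}(\U_3)=0$ forces $\U_3=0$, after which the classical equivalence recalled in the introduction --- ``a Killing vector along the vorticity exists $\Leftrightarrow$ the acceleration is orthogonal to the vorticity'' \cite{WhiteCollins,Sikhonde_thesis} --- immediately delivers the Killing vector along $\w{\omega}$. Thus the real content of the theorem is the implication $\w{u}(\U_3)=0\Rightarrow\U_3=0$. This is consistent with the fact that a direct ansatz $\xi = f\,\partial_3$ would in any case force this conclusion, since the $(0,0)$-component of the Killing equation reduces to $f\,\dot{u}_3=0$.

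The first step is to translate $\partial_0\U_3=0$ through the evolution equation for $\U_3$ in \cite{VdB-Radu2016},
$$\partial_0\U_3 = p'\theta\U_3 - \frac{3\lambda^5}{4(p+\mu)}\b_3,$$
obtaining the algebraic identity $\b_3 = \tfrac{4\,p'\theta(p+\mu)}{3\lambda^5}\U_3$. Since $\b_3$ and $\U_3$ are basic while $\theta,\lambda,p,\mu$ are not, a second application of $\partial_0$ --- using $\partial_0\lambda=-\lambda\theta/3$, $\partial_0\mu=-(p+\mu)\theta$, together with the identity $p''(p+\mu)=p'(G+p'-\tfrac{1}{3})$ that follows from the definition of $G$ --- yields, after elementary simplifications,
$$\U_3\bigl[\partial_0\theta + \theta^2(1-G-2p')\bigr]=0.$$

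Either $\U_3=0$ and we are done, or the bracket vanishes. In the latter branch I would substitute the shear-free Raychaudhuri equation of \cite{VdB-Radu2016} for $\partial_0\theta$ and collect terms to obtain a single algebraic relation involving the non-basic quantities $\theta^2,\mu,\dot{\w{U}}^2$ and the basic set $\mathcal{S}$. The main obstacle is to eliminate the remaining non-basic combinations: this requires the algebraic equations in $\dot{\w{U}}$ and $\theta$ that accompany the basic splitting of the system in \cite{VdB-Radu2016}, together with the Jacobi identities on $[\partial_3,\partial_\alpha]$ and the basic Bianchi/Ricci equations of Appendix~2, to reduce the relation to a product of basic factors whose simultaneous vanishing would contradict either $\O\neq 0$ (rotating) or $\theta\neq 0$ (expanding). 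The delicate cancellations peculiar to the basic formalism are where the argument is most error-prone; once carried through, $\U_3=0$ follows and the Killing vector along $\w{\omega}$ is produced by the White--Collins construction.
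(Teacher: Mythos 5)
There is a genuine gap here, on both halves of your plan. First, the implication ``$\U_3=0\Rightarrow$ a Killing vector along the vorticity exists'', which you invoke as a known classical equivalence, is not available independently of the statement you are trying to prove: in this paper (and in \cite{VdB-Radu2016}) that implication \emph{is} the special case $\U_3=0$ of Theorem \ref{th2}, and Corollary \ref{co1} is deduced \emph{from} the theorem, not the other way round. The reference \cite{WhiteCollins} concerns the parallel case $\u_1=\u_2=0$ and does not contain this statement, and the covariant argument in \cite{Sikhonde_thesis} is a re-derivation of the same corollary, not an older independent fact you may freely cite. Verifying that $\U_3=0$ propagates, via the field equations, into \emph{all} the remaining Killing conditions for $\xi\propto\partial_3$ (vanishing of $\rho$, of $n_{12}$, $n_{11}-n_{22}$, $n_{33}$, $\partial_3\theta$, $\partial_3\mu$, \dots; compare section \ref{sectionKV}, which lists what an assumed Killing vector \emph{implies} but does not establish the converse) is precisely the substantive content of the theorem, and your proposal assumes it rather than proving it. Your remark that the $(00)$ Killing equation forces $f\,\dot{u}_3=0$ only shows the \emph{necessity} of $\U_3=0$, not its sufficiency.

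Second, the reduction ``$\U_3$ basic $\Rightarrow\U_3=0$'' is only sketched. The first differentiation is correct: with $\partial_0\U_3=0$, equation (\ref{e0U3}) gives $\b_3=\tfrac{4}{3}p'\theta(p+\mu)\lambda^{-5}\U_3$, and a second application of $\partial_0$, using $p''(p+\mu)=p'(G+p'-\tfrac{1}{3})$, does yield $\U_3\bigl[\Theta+\theta^2(1-G-2p')\bigr]=0$. But in the branch $\Theta=\theta^2(G+2p'-1)$ everything that makes the theorem hard is still ahead of you: substituting the Raychaudhuri equation (\ref{Raych}) produces one relation mixing $\theta^2$, the acceleration, $\mu$-dependent coefficients and basic quantities, and you give no argument that the subsequent eliminations terminate in a contradiction with $\omega\theta\neq0$ rather than in a consistent subfamily (a condition of the form $\Theta=\theta^2\times(\text{function of }\mu)$ is not by itself absurd). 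Asserting that the ``delicate cancellations'' will work out is not a proof. Note finally that even if this branch were closed, establishing $\U_3=0$ would merely return you to the unproven first point; the paper itself does not reprove Theorem \ref{th2} but refers to \cite{VdB-Radu2016}, where the Killing vector is actually constructed.
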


As this theorem applies in particular to the case $\U_3=0$ ($\dot{\w{u}}$ orthogonal to $\w{\omega}$) and as, vice versa, the existence
of a Killing vector along the vorticity automatically implies $\U_3=0$, one also obtains
\begin{co}\label{co1}
For a rotating and expanding shear-free perfect fluid with a barotropic EOS, the acceleration is orthogonal to the vorticity if and only if a Killing vector exists along the vorticity.
\end{co}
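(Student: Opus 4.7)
The corollary is a two-way implication, and one direction is essentially a direct application of Theorem 2. With the tetrad convention $\w{\omega} = \omega \partial_3$, the condition that the acceleration is orthogonal to $\w{\omega}$ is simply $\dot{u}_3 = 0$, equivalently $\U_3 = 0$. Hence one direction of the equivalence is immediate: if $\U_3 \equiv 0$, then $\U_3$ is trivially a basic function (the zero function is constant, hence conserved along the flow), so Theorem 2 applies and yields a Killing vector along $\w{\omega}$.

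For the reverse direction, assume a Killing vector $\w{\xi}$ exists along $\w{\omega}$. Then $\w{\xi} = F \partial_3$ for some (nowhere vanishing) scalar $F$, and in particular $\xi_a u^a = F u_3 = 0$ since $\w{u} = \partial_0$. The plan is to differentiate this orthogonality condition and combine it with the Killing equation $\xi_{(a;b)} = 0$. Specifically, from $\xi_a u^a = 0$ one obtains
\begin{equation*}
\xi_{a;b} u^a + \xi_a u^a{}_{;b} = 0.
\end{equation*}
Contracting with $u^b$, the first term becomes $\xi_{(a;b)} u^a u^b = 0$ by the Killing property, while the second term equals $\xi_a \dot{u}^a = F \dot{u}_3$. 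This forces $\dot{u}_3 = 0$, i.e.\ $\U_3 = 0$, which is exactly the statement that the acceleration is orthogonal to the vorticity.

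\textbf{Expected obstacle.} There is essentially no hard step: Theorem 2 does all the heavy lifting for one direction, and the other direction is a two-line tensor manipulation that does not require any of the shear-free, barotropic, or rotating/expanding hypotheses (it is a general geometric fact about any Killing field orthogonal to a timelike congruence). The only point to be careful about is ensuring that $F$ is genuinely nonzero so that ``along the vorticity'' is a meaningful condition; this is built into the hypothesis of the corollary. Hence the role of the corollary is mainly conceptual: it packages Theorem 2 together with the elementary observation above into a clean invariant characterization.
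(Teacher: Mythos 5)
Your proposal is correct and follows the paper's route: the forward direction is exactly the paper's observation that $\U_3=0$ is a (trivially basic) special case of Theorem \ref{th2}, and the reverse direction supplies an explicit justification for the step the paper merely asserts ("the existence of a Killing vector along the vorticity automatically implies $\U_3=0$"). Your two-line computation $\xi_{(a;b)}u^au^b=0$ combined with $\xi_au^a=0$ and $F\neq0$ is a valid way to fill in that assertion (an equivalent route is $\xi(p)=0$ together with the Euler equation $\partial_3 p=-(p+\mu)\dot u_3$), so nothing is missing.
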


In \cite{Sikhonde_thesis} an attempt was made to generalize this to the case where, in stead of $\U_3$ being basic, one has $\U_3=f \mathfrak{B}$ with $f=f(\mu)$ and $\mathfrak{B}$ basic. However the proof of this subcase was not completed.

\begin{te} \label{th3}\cite{VdB-Radu2016}
If a shear-free perfect fluid, obeying a barotropic EOS, admits a Killing vector along the vorticity and if the basic variables
$\b_1=\b_2=\Q_1=\R_2=\E_{12}=\E_0=\partial_1 \J=\partial_2 \J$ are 0, then $\omega \theta = 0$.
\end{te}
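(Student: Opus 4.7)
I would argue by contradiction, assuming $\omega\theta\neq 0$ throughout. The first step is to translate the hypotheses into their operational form. By Corollary \ref{co1}, the existence of a Killing vector along $\w{\omega}$ gives $\U_3=0$; moreover, invariance of the basic set under the associated symmetry forces $\Z(f)=0$ for every basic function $f$ and makes the basic system on the $(1,2)$-plane decouple from the $\partial_3$-direction. Under the hypotheses $\b_1=\b_2=0$ and $\Q_1=\R_2=0$, the definitions (\ref{convert_b1})--(\ref{convert_Q2R2}) determine $z_1,z_2,q_1,r_2$ algebraically as specific multiples of $\U_1,\U_2$ weighted by $\omega$ and $\lambda$; similarly $\E_0=\E_{12}=0$ turns the $(1,2)$-block of $E_{\alpha\beta}$ into a pure trace. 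Inserted back into (\ref{convert_J}), the definition of $\J$ becomes a polynomial relation in $\theta$, $\U_1$, $\U_2$ and $\O$ whose remaining coefficients are basic.

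Next, I would exploit $\partial_1\J=\partial_2\J=0$ by acting with $\X$ and $\Y$ on this polynomial form of $\J$. Because $\U_1=-3\X(\ln\lambda)$ and $\U_2=-3\Y(\ln\lambda)$ and because $\mu$ and $\theta$ are non-basic, $\X(\mu),\Y(\mu),\X(\theta),\Y(\theta)$ get eliminated in terms of $\U_1,\U_2,\theta,\O$ and the surviving basic quantities. Combining these eliminations with the propagation of $\b_1=\b_2=0$ along $\X,\Y$ (which uses the basic equations of Appendix 2 of \cite{VdB-Radu2016} and the fact that $[\X,\Y]$ is itself a basic vector field by property $(ii)$ of basic sets) produces a cascade of algebraic identities in the non-basic variables whose coefficients depend on $p',p''$ only through the combination $G+p'-\tfrac13$, which is nonzero by (\ref{notlinear}).

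The decisive step is to propagate the resulting identities along $\w{u}$. Because the evolution of $\mu,\theta,\U_1,\U_2$ is fixed and explicit, repeated application of $\partial_0$ to these algebraic identities turns them into polynomial relations in $\theta$ and $\omega$ with basic coefficients; the constraint $\U_3=0$ must be preserved under $\partial_0$, which together with the already-used gauge $\Q_1=\R_2=\E_0=\E_{12}=0$ closes the system. I expect the final contradiction to take the form of a polynomial identity whose leading coefficient in $\omega$ or $\theta$ is forced to vanish only if $\omega\theta=0$, the nonlinearity hypothesis (\ref{notlinear}) guaranteeing that no EOS-dependent prefactor is allowed to eat up the conclusion.

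The principal obstacle is the bookkeeping of constraint preservation: one must verify that each of the eight vanishing conditions is stable under $\partial_0$, $\X$, $\Y$ on the putative solution locus, so that no spurious additional constraint collapses the argument before the genuine contradiction appears. This is the step where, in proofs of analogous theorems such as Theorem \ref{th1} and Theorem \ref{th2}, most of the symbolic-algebra effort has traditionally been spent.
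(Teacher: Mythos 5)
Your proposal is a plan rather than a proof, and it misses the one idea that actually closes the argument. In the paper the statement is subsumed by Lemma \ref{betakappalemma} of section \ref{sectionKV}: in the complex notation the hypotheses $\b_1=\b_2=0$ and $\Q_1=\R_2=0$ are exactly $\beta=\kappa=0$, and once the Killing vector has been used to establish $\U_3=\rho=\pi=\Eone=0$ and $\ZZ\equiv 0$ (which, note, is \emph{not} automatic "invariance of the basic set" as you assert, but requires ruling out $\pi\neq 0$ via the Lang--Collins result \cite{Lang}), the basic equation (\ref{beq3}) already forces $\Etwo=0$, so your hypotheses $\E_0=\E_{12}=0$ and $\partial_1\J=\partial_2\J=0$ are not even needed. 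The decisive step is then a single integrability condition, $[\partial_0,\et]\u$ (equation (\ref{eq5})), which under these conditions collapses to
\begin{equation*}
A(\mu)\,\theta \;+\; i\,B(\mu)\,\frac{\O\lambda}{\ez}\;=\;0 ,
\end{equation*}
with $A$ and $B$ explicit polynomials in $p'$, $G$, $\phi$. Since $\theta\neq 0\neq\O$ and the coefficients depend on $\mu$ only, both must vanish; eliminating $\phi$ and $G$ from $A=B=0$ forces a \emph{linear} equation of state, and Theorem \ref{th4} then gives $\omega\theta=0$.

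This is where your proposal has a genuine gap: you anticipate "a polynomial identity whose leading coefficient in $\omega$ or $\theta$ is forced to vanish only if $\omega\theta=0$," i.e.\ a direct algebraic contradiction in the kinematical variables. That is not how the case resolves, and nothing in your scheme of acting with $\X$, $\Y$ and $\partial_0$ on the definition of $\J$ identifies which relation will deliver the contradiction or why the cascade should terminate. The actual endgame is a \emph{reduction to the linear-EOS case}: the hypotheses over-determine the equation of state, contradicting (\ref{notlinear}) unless one lands in the regime already settled by Theorem \ref{th4}. Without naming that mechanism (or some substitute for it), the "repeated propagation" strategy is an open-ended computation with no guarantee of closure, and the constraint-preservation bookkeeping you flag at the end is precisely the part that remains unproved.
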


\begin{te} \label{th4}\cite{VdB-Radu2016}
If a shear-free perfect fluid obeys a \emph{linear} EOS (with $p+\mu\neq0$) then $\omega \theta = 0$.
\end{te}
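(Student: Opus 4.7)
The plan is to derive a contradiction from the assumption that $\omega\neq 0$ and $\theta\neq 0$, using the basic SO2 formalism just assembled in the excerpt but specialised to the linear EOS $p=(\gamma-1)\mu+p_0$. First I would exploit the massive simplifications such an EOS brings: $p'=\gamma-1$ is constant, $p''=0$, so $G=\tfrac{4}{3}-\gamma$ is constant, $G'=G_p=0$, and $\lambda$ becomes an explicit power of $p+\mu$. In particular $G+p'-\tfrac{1}{3}=0$, so condition (\ref{notlinear}) is precisely the one \emph{not} imposed here; it is exactly this degeneration that makes the linear case structurally different from the generic one and warrants a separate treatment.

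Next, I would import the basic equations of Appendix~2 of \cite{VdB-Radu2016}, set $G'=G_p=0$ and $G={\rm const}$, and use the simplified evolution equations for $\mu,\theta,\U_\alpha$ to eliminate non-basic quantities. Applying the basic vector fields $\X,\Y,\Z$ to the resulting identities — and using property $(iii)$ of section~3 (the spatial derivative of a basic function along a basic vector field is basic) — generates new basic identities. Commuting $\partial_0$ with $\X,\Y,\Z$ on the algebraic constraints produces further relations. Iterating this finite integrability chain yields polynomial identities whose coefficients are polynomials in $\gamma-1$; for a generic $\gamma$, the leading coefficient does not vanish and directly forces $\omega=0$ or $\theta=0$, or at least forces $\U_1,\U_2$ to become basic so that Theorem~\ref{th1} applies.

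The main obstacle is the exceptional set $\gamma-1\in\{-\tfrac{1}{5},-\tfrac{1}{6},-\tfrac{1}{11},-\tfrac{1}{21},\tfrac{1}{15},\tfrac{1}{4}\}$ identified in \cite{Slobodeanu2014}, where precisely the leading coefficients of those polynomial identities vanish and the above argument stalls. For each of these finitely many values I would push the integrability chain one step further, producing a new resonance among the subleading coefficients. In most of these cases the extra identity forces either $\U_3$ to be basic (so Theorem~\ref{th2} and Corollary~\ref{co1} reduce the problem to the Killing-vector subcase handled by Theorem~\ref{th3}) or forces $\U_1,\U_2$ to be basic (reducing to Theorem~\ref{th1}). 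The value $\gamma-1=\tfrac{1}{4}$, being a boundary of genericity for several coefficients at once, is the most delicate and likely requires invoking the tetrad specialisation $n_{11}=n_{22}$ (or, if (\ref{extra_rot}) holds, a further basic rotation) to kill one residual degree of freedom.

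The cosmological-constant extension ($p_0\neq 0$) is handled merely by keeping $p_0$ as a symbolic constant throughout, which changes the explicit function $\lambda(\mu)$ but preserves the algebraic structure of the basic equations; the coefficient polynomials in $\gamma$ acquire $p_0$-dependent terms but their vanishing sets in $\gamma$ are unchanged, so the exceptional case analysis carries over verbatim.
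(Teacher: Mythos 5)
This theorem is not proved in the present paper at all: it is quoted as Theorem \ref{th4} with a citation to \cite{VdB-Radu2016}, so there is no in-paper argument to compare yours against. What you have written is a correct description of the \emph{setting} (the linear EOS is exactly the degenerate case $G+p'-\smfrac{1}{3}=0$ excluded by (\ref{notlinear}), with $p'$, $G$ constant and $G'=G_p=0$) and a reasonable summary of the \emph{strategy} used in \cite{Slobodeanu2014,VdB-Radu2016} (chains of integrability conditions, splitting into basic and non-basic parts, polynomial identities in $\gamma-1$, special treatment of the exceptional values). But it is a research plan, not a proof. Every load-bearing step is asserted rather than derived: you never exhibit the polynomial identities, never show that their leading coefficients are non-zero for generic $\gamma$, never show that their vanishing "directly forces" $\omega\theta=0$ or reduces to Theorems \ref{th1}--\ref{th3}, and the exceptional values are dispatched with "I would push the chain one step further" and "likely requires". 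The actual argument in \cite{VdB-Radu2016} is a long computer-algebra-assisted case analysis; nothing in your sketch substitutes for it.

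One concrete claim in your proposal is also wrong, or at least unsupported in a way that matters. You assert that the $p_0\neq 0$ extension is obtained by "keeping $p_0$ as a symbolic constant", that the "vanishing sets in $\gamma$ are unchanged", and that "the exceptional case analysis carries over verbatim". The paper's own footnote states that inclusion of the cosmological constant is non-trivial precisely in the $\gamma$-law case (it cannot be absorbed into redefinitions of $p$ and $\mu$ without introducing a cosmological term into the field equations), and the whole point of \cite{VdB-Radu2016} relative to \cite{Slobodeanu2014} was that both the exceptional $\gamma$ values \emph{and} the $p_0\neq 0$ case required genuinely new work. If the coefficient polynomials acquire $p_0$-dependent terms, their zero sets in $\gamma$ generically change, contradicting your claim. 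So even as an outline, the final paragraph would not survive contact with the computation.
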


\section{Intermezzo: description in terms of coordinates}

From the definition of $\lambda$ in (\ref{EOSconventions}), the conservation laws $\partial_0(\mu)+ \epsilon \theta=0$, $\partial_3(p)+\epsilon \u_3=0$ (with $\epsilon=\mu+p$) and the commutator relation (\ref{comm_explicit}) with $\alpha=3$, it follows that the vector fields $\epsilon^{-1}\lambda^3 \partial_0$ and $\lambda^{-1}\partial_3$ commute. This fact (which seems to have gone unnoticed before) implies the existence of coordinates $t,x,y,z$ such that
\begin{eqnarray}
\partial_0 &=& \epsilon \lambda^{-3}\frac{\partial}{\partial t},\\
\partial_3 &=& \lambda h^{-1} \frac{\partial}{\partial z},\\
\partial_I &=& \lambda (\mathcal{A}_I\frac{\partial}{\partial t}+\mathcal{B}_I\frac{\partial}{\partial z}+\mathcal{M}_I\frac{\partial}{\partial x}+\mathcal{N}_I\frac{\partial}{\partial y})\ (I=1,2), \\
\end{eqnarray}
with $ h= h(x,y,z)\neq 0$ a disposable basic function \footnote{One can put $ h=1$, but this may not be the best choice} and $\mathcal{A}_I, \mathcal{B}_I,\mathcal{M}_I,\mathcal{N}_I$ arbitrary functions.
The dual basis is then given by
\begin{eqnarray}
\w{\omega}^0 &=& \lambda^3\epsilon^{-1}\left( \ud t - \frac{\mathcal{A}_1\mathcal{N}_2-\mathcal{A}_2\mathcal{N}_1}{\Delta}\ud x + \frac{\mathcal{A}_1\mathcal{M}_2-\mathcal{A}_2\mathcal{M}_1}{\Delta}\ud y \right),\\
\w{\omega}^3 &=& \lambda^{-1}  h \left(\ud z - \frac{\mathcal{B}_1\mathcal{N}_2-\mathcal{B}_2\mathcal{N}_1}{\Delta}\ud x + \frac{\mathcal{B}_1\mathcal{M}_2-\mathcal{B}_2\mathcal{M}_1}{\Delta}\ud y \right),\\
\w{\omega}^1 &=& \lambda^{-1} \Delta^{-1} [ \mathcal{N}_2\, \ud x -\mathcal{M}_2\, \ud y],\\
\w{\omega}^2 &=& \lambda^{-1} \Delta^{-1} [ -\mathcal{N}_1\, \ud x +\mathcal{M}_1\,  \ud y],
\end{eqnarray}
with $\Delta=\mathcal{M}_1\mathcal{N}_2-\mathcal{M}_2\mathcal{N}_1$.

Inserting this into the remaining commutator relations (\ref{comm_explicit}), it follows that $\mathcal{A}_I, \mathcal{B}_I,\mathcal{M}_I,\mathcal{N}_I$ are \emph{basic} functions. \footnote{Additionally the $[\partial_I,\, \partial_3]$ commutator relations imply
\begin{equation*}
\frac{\partial}{\partial z}\left(\frac{\mathcal{A}_1\mathcal{M}_2-\mathcal{A}_2\mathcal{M}_1}{\Delta}\right)=\frac{\partial}{\partial z}\left(\frac{\mathcal{A}_1\mathcal{N}_2-\mathcal{A}_2\mathcal{N}_1}{\Delta}\right)=0,
\end{equation*}
showing that a transformation $t\to t +f(x,y)$ exists such that $\mathcal{A}_1\mathcal{N}_2-\mathcal{A}_2\mathcal{N}_1=0$ or $\mathcal{A}_1\mathcal{M}_2-\mathcal{A}_2\mathcal{M}_1=0$.}

A significant simplification occurs when a Killing vector exists along the vorticity. In section \ref{sectionKV} I present a simple argument showing that in this case $n_{11}-n_{22}=n_{12}=n_{33}=0$, in addition to the obvious conditions $\partial_3(\theta)=\partial_3(\mu)=0$. Fixing the tetrad such that $\w{e}_3$ is identically 0 when acting on invariantly defined objects guarantees by the (13) and (23) field equations that also $n_{11}=n_{22}=0$ (see Remark 5 of \cite{VdB-Radu2016} for an argument showing that such a fixation is always possible). Substituting all this in the commutator relations (\ref{comm_explicit_bis}) one finds that $\mathcal{A}_I,\mathcal{M}_I,\mathcal{N}_I$ are independent of $z$, implying the existence of coordinates $x,y$ such that $\mathcal{M}_2=\mathcal{N}_1=0$ and $\mathcal{M}_1=\mathcal{N}_2:=g(x,y)$. By a $(x,y)$ dependent translation of $t$ and a transformation of the $z$-coordinate one achieves $\mathcal{A}_2=\mathcal{B}_2=0$ (herewith $ h$ is no longer disposable). From (\ref{comm_explicit_bis}) it follows that $h$ and $\mathcal{B}_1 / g$ are independent of $y$, such that a further transformation of the $z$ coordinate makes $\mathcal{B}_1=0$, while $h$ becomes separable in $x$ and $z$. Absorbing the $z$-dependent factor in $\ud z$, finally leads to the following form of the tetrad:
\begin{eqnarray}
\partial_0 &=& \epsilon \lambda^{-3}\frac{\partial}{\partial t}, \label{ds_simpleforma}\\
\partial_3 &=& \lambda h^{-1} \frac{\partial}{\partial z},\\
\partial_1 &=& \lambda (f\frac{\partial}{\partial t} + g \frac{\partial}{\partial x}), \\
\partial_2 &=& \lambda g \frac{\partial}{\partial y},\label{ds_simpleformd}
\end{eqnarray}
with $f=f(x,y), g=g(x,y)$ and $ h= h(x)$.

The simple form of the corresponding line-element is misleading: expressing that the right hand-side of the field equations is given by the energy-momentum tensor (\ref{intro_Tab}) with $p=p(\mu)$ remains a difficult task. Even when $ h=1$, corresponding to the special case $\kappa=0$ studied in section \ref{sectionKV}, the `pure tetrad approach' followed in the latter section appears to be more efficient than the coordinate approach.

\section{Complexification}\label{complexification}
It is no exaggeration to say that the equations in \cite{VdB-Radu2016} look quite complicated. In order to bring some structure into these, one first notices that the appearance of the equations begs for the introduction of complex variables, making the transformation properties under basic rotations in the 2-plane orthogonal to the vorticity explicit. Therefore, as a first step, I introduce a complex acceleration variable $\u$ and a differential operator $\d$ by
\begin{eqnarray}
\u &=& \U_1+ i \U_2= \frac{\u_1+i \u_2}{\lambda p'}, \label{ud_def1}\\
\d  &=& X + i Y = \frac{1}{\lambda} (\partial_1+i \partial_2). \label{ud_def2}
\end{eqnarray}
Guided by the transformation properties of the variable pairs $(\Q_1,\R_2)$, $(\Q_2,\R_1)$, $(\Q_3, N_{11}-N_{22}), (\E_{13},\E_{23}), (\E_0, 2 \E_{12})$ of \cite{VdB-Radu2016} I also define new quantities $\kappa,\tau,\pi,\rho,\Eone,\Etwo$ (complex ) and $\nu,\Ethree$ (real) by
\begin{eqnarray}
\kappa &=& \Q_1- i \R_2,\label{complex_def1}\\
\pi &=& \Q_3-\frac{i}{2} (N_{11}-N_{22})\label{complex_def2},\\
\tau &=& \Q_2+i \R_1,\label{complex_def3}\\
\rho &=& N_{33}+ 2 i \R_3,\label{complex_def4}\\
\nu &=& N_{33}-N_{11}-N_{22}, \label{complex_def5}\\
\Eone &=& \E_{13}+ i \E_{23},\label{complex_def6}\\
\Etwo &=& \tfrac{1}{3} (\E_0+2 i \E_{12}),\label{complex_def7}\\
\Ethree &=& \E_3 . \label{complex_def8}
\end{eqnarray}
From the definitions (\ref{convert_O}, \ldots , \ref{convert_Q3R3}), (\ref{ud_def1}, \ldots ,\ref{complex_def8}) and the commutator relations for the $\partial_a$ vector fields some straightforward algebra leads then to the following commutator relations involving the new derivative operators:
\begin{eqnarray}
 {}[ \partial_0,\, \d ] &=& p' \u\, \partial_0, \label{defcom0d} \\
 {}[ \partial_0,\, \Z ] &=& p' \U_3\, \partial_0, \label{defcom0Z}\\
 {}[ \Z,\, \d ] &=& \frac{i}{2}(\nu-\overline{\rho})\, \d +\kappa \, \Z +\pi \, \dc, \label{defcomZd}\\
 {}[ \dc,\, \d ] &=& -4 i \frac{\O \lambda^3}{p+\mu}\, \partial_0 +i \overline{\tau}\, \d + i \tau \, \dc +i (\rho+\overline{\rho})\, \Z ,\label{defcomdcd}
\end{eqnarray}
providing an independent check that $\kappa,\pi,\tau,\nu,\rho$ are basic.

In what follows I will often be using, instead of the variable $\epsilon = \mu+p$ introduced in \cite{VdB-Radu2016}, the variable
\begin{equation}\label{defeps0}
\ez = \frac{\epsilon}{\lambda^4},
\end{equation}
while the propagation of $\theta$ along the fluid flow will be written as
\begin{equation}\label{Thetadef}
\Theta = \partial_0 \theta
\end{equation}
and the projection of the gradient of $\theta$  in the (12)-plane (in accordance with (\ref{convert_b1},\ref{convert_b2})) as\footnote{$z$ here is not to be confounded with the $z$-coordinate introduced in the previous section.}
\begin{equation}
\d \theta = z.
\end{equation}

Translating the results of \cite{VdB-Radu2016}, using the above complex variables, one finds then
\begin{enumerate}
\item the derivatives of the acceleration and the expansion (see Appendix 1).
\item a set of basic equations (see Appendix 2),
\item an evolution equation for $z$, which together with (\ref{e0u}) implies that the following combination of $u$ and $z$ is basic:
\begin{equation}
(\frac{1}{3} - 3 p')\O \u-\frac{2 i}{3 \lambda}\ez z = \beta \label{z_u_relation} ,
\end{equation}
where the basic function $\beta$ (which is an `integration constant') is related to $\b_1,\b_2$ by (\ref{convert_b1},\ref{convert_b2}):
\begin{equation}
\b_1+i\b_2 = -2 i \beta
\end{equation}
while $\b_3$ of equation (\ref{convert_b3}) turns out to be related to $\rho$ by $\b_3= -\O (\rho+\rhoc)$. Herewith one also obtains expressions for the spatial gradient of $\O$ (see Appendix 2).
\end{enumerate}

\textbf{Remark 1}

In order to obtain the complex equations in a more direct way, one can start by \emph{assuming}\footnote{rather than by painstakingly working through the procedure of \cite{VdB-Radu2016} to demonstrate that they are basic} the quantities $\kappa,\tau,\pi,\rho,\nu$ to be basic, after which the $(0\alpha)$ Einstein equations and the
$\partial_0 ( a_\alpha )$ Jacobi equations result in expressions (\ref{ap0u},\ref{ap0U3},\ref{apZtheta}, \ref{apdO}), with (\ref{apdO}) showing that the lhs of (\ref{z_u_relation}) is basic indeed. Next, defining $\J,\Eone,\Etwo,\Ethree$ by (\ref{convert_J}-\ref{convert_E3},\ref{complex_def6}-\ref{complex_def8}), equations (\ref{apdu}-\ref{apZu},\ref{apdU3},\ref{apZU3}) follow from the expressions for the electric part of the Weyl tensor \footnote{see for example \cite{Norbert2013}}, while (\ref{apdO}-\ref{Beq1}) follow from the spatial part of the Einstein equations and the `$\mathbf{\nabla \cdot n}$' Jacobi equations. The operators $\d,\dc,\Z$ being basic, this last set of equations confirms then also that $\J,\Eone,\Etwo,\Ethree$ are basic.

\textbf{Remark 2}

For the tetrad (\ref{ds_simpleforma}-\ref{ds_simpleformd}), describing the situation of a shear-free perfect fluid with a Killing vector aligned with the vorticity, the only surviving basic rotation coefficients are $\kappa, \tau$ and $\O$, related to the functions $f,g, h$ by
\begin{equation}
\kappa = g \frac{ h_{,x}}{ h},\ \tau = i( g_{,x}+i g_{,y}),\ \Omega =\frac{1}{2}(g f_{,y}-f g_{,y})
\end{equation}
and with the basic function $\beta$ (defined by (\ref{z_u_relation})), involving also the derivatives of $\mu$ w.r.t.~$x,y$ and $t$.

\section{Basic SO2 formalism}\label{section_basic_SO2}
In analogy with \cite{GHP} or \cite{Wylleman_thesis} a quantity $\Phi$ will be said to be \emph{well weighted} if it transforms under a basic rotation (\ref{basicrotations}) as
\begin{equation}
\Phi \to e^{i W \alpha} \Phi \ (W\in \mathbb{Z})
\end{equation}
and $W$ will be called the weight of $\Phi$: $W=w(\Phi)$. From their definitions it is clear that all quantities defined in section \ref{complexification}, with the exception of $\nu$ and $\tau$, are well weighted, with weights $0,\pm 1,\pm 2$:
\begin{itemize}
  \item 0: $\mu, p, \ez, \lambda, G, \theta, \U_3, \rho, \O, \J, \Ethree$ and the operators $\w{e}_0$ and $\Z$
  \item 1: $\u, \kappa, \beta, \Eone$ and the operator $\d$
  \item 2: $\pi, \Etwo$
\end{itemize}
The fact that well weighted basic objects don't remain well weighted under the action of $\d$ or $\Z$ can be remedied by `absorbing' the badly transforming rotation coefficients $\nu$ and $\tau$ in $\d$ and $\Z$: define therefore new derivative operators $\et,\etd,\ZZ$ by
\begin{eqnarray}
\et(x) = \d(x) -i w(x) \tau x,\\
\etd(x) = \dc(x) - i w(x) \tauc x,\\\label{def_etd}
\ZZ(x) = Z(x) -\frac{i}{2} w(x) \nu x ,
\end{eqnarray}
where $x$ is an arbitrary well weighted function on space-time and with the sign in the rhs of (\ref{def_etd}) being due to the fact that $w(\overline{x})=-w(x)$ and $\overline{\et(x)}=\etd(\overline{x})$.
It is then clear that $\et, \etd$ and $\ZZ$ map objects of weight $W$ to objects of weight $W+1, W-1, W$ respectively:
 \begin{equation}
 w(\et)=1, \ \ w(\etd) = -1, \ \ w(\ZZ) = 0 .
 \end{equation}
Herewith $\nu$ and $\tau$ disappear from the formalism, with the main equations now simplifying to
\begin{enumerate}
\item equations for matter density, acceleration and expansion:

 \begin{eqnarray}
 \partial_0(\mu) &=& -(\mu+p) \theta, \label{e0mu}\\
 \et(\mu) &=& -(\mu+p) \u, \label{etmu} \\
 \ZZ(\mu) &=& -(\mu+p) \U_3, \label{ZZmu}
 \end{eqnarray}

\begin{eqnarray}
 \partial_0(\u) &=& p' \u \theta +z,\label{e0u} \\
 \et (\u)  &=& (\frac{3 p'G}{1+3 p'} +\tfrac{1}{3}) \u^2  + 2 \pi  \U_3 + \frac{3}{1 + 3 p'}\Etwo , \label{etu}\\
 \etd (\u) &=& \frac{3 p'G}{1+3 p'} \modU - ( \frac{2 p'G}{1+3 p'}+\tfrac{1}{3}) \U_3^2 + i \rho \U_3 + \frac{\theta^2}{3 \lambda^2}  \nonumber \\
  & & - 2 i \frac{\O\theta}{\lambda \ez} + \frac{9 p' + 7}{\ez^2 (1 + 3 p')} \O^2 - \frac{\mu}{\lambda^2} - \frac{\J}{3} - \frac{\Ethree}{1 + 3 p'} ,\label{etdu}\\
 \ZZ(\u)  &=& (\frac{3 p'G}{1+3 p'}+\tfrac{1}{3}) \U_3 \u +\kappa \U_3+\frac{3}{1+3 p'} \Eone,\label{ZZu}
\end{eqnarray}

\begin{eqnarray}
\partial_0(\U_3) &=& p' \U_3 \theta + \frac{3 \lambda}{4 \epsilon_0} \O (\rho+\rhoc),\label{e0U3}\\
\et (\U_3) &=& [ (\frac{3 p'G}{1+3 p'}+\tfrac{1}{3}) \U_3 + i \frac{\rhoc}{2} ] \u- \pi \uc +\frac{3}{1+3p'} \Eone
,\label{etU3}\\
\ZZ (\U_3) &=& -( \frac{ p'G}{1+3 p'} +\tfrac{1}{6}) \modU - \tfrac{1}{2}(\kappac \u+ \kappa \uc)
+ (\frac{2 p'G}{1+3 p'}+\tfrac{1}{6}) \U_3^2 \nonumber \\
& & + \frac{\theta^2}{6 \lambda^2} + \frac{9 p' - 5}{2 \ez^2 (1 + 3 p')} \O^2 - \frac{\mu}{2 \lambda^2} - \frac{\J}{6} + \frac{1}{1 + 3 p'} \Ethree,\label{ZU3}
\end{eqnarray}

\begin{eqnarray}
\partial_0(\theta) &=& \frac{\lambda^2 p'}{2}(1-2 G)(\modU +\U_3^2) +(\frac{p'}{2}-\frac{1}{3})\theta^2+\frac{\lambda^2}{2 \ez^2}(9p'+4)\O^2 \nonumber \\
& & -\frac{\lambda^2 p'}{2} \J -\frac{3}{2} \lambda^4\ez +(1-\frac{3}{2}p')\mu  , \label{Raych}\\
\et (\theta) &=& z , \label{ettheta} \\
\ZZ (\theta) &=& \frac{3 \lambda}{4\ez } \O (\rho+\rhoc), \label{ZZtheta}
\end{eqnarray}
with $z$ related to $\u$ by (\ref{z_u_relation}) and with $\modU$ an auxiliary variable defined by

\begin{eqnarray}
\modU &=& \u \uc .
\end{eqnarray}

\item basic equations:
\begin{eqnarray}
\et (\O) = \beta -\kappa \O, \label{dOmega}\\
\ZZ (\O) = \frac{i}{2} \O (\rho - \rhoc), \label{ZOmega} \\
\et (\rho)+2 i \etd (\pi) = -(\rho+\rhoc ) \kappa-2 i \Eone , \label{beq5}\\
\et (\kappa)-2 \ZZ(\pi)= i (\rhoc -\rho)\pi-\kappa^2-\Etwo \label{beq3} ,\\
\etd (\kappa)-i \ZZ(\rho) = -|\kappa |^2-2 |\pi|^2 +\tfrac{2}{9} \J-\tfrac{1}{3} \Ethree+\tfrac{1}{2} \rho^2 , \label{beq4}
\end{eqnarray}

\end{enumerate}

together with their complex conjugates.\\

This set has to be augmented by the new commutator relations, namely

\begin{eqnarray}
 {}[ \partial_0,\, \et ](x) &=& p' \u\, \partial_0(x), \label{Defcom0d} \\
 {}[ \partial_0,\, \ZZ ](x) &=& p' \U_3\, \partial_0(x), \label{Defcom0Z}\\
 {}[ \ZZ,\, \et ](x) &=& -\frac{i}{2}\overline{\rho}\, \et(x)  +\pi \, \etd(x) +\kappa \, \ZZ(x) \nonumber \\
 & & -(\frac{i}{2}\rhoc \kappa+\pi\kappac-\Eone)x\, w(x), \label{DefcomZd}\\
 {}[ \etd,\, \et ](x) &=& -4 i \frac{\O}{ \lambda \ez}\, \partial_0(x) +i (\rho+\overline{\rho})\, \ZZ(x)\nonumber \\
  & & +(\frac{1}{2}|\rho|^2-2|\pi|^2-\frac{2}{9}\J-\frac{2}{3}\Ethree) x \, w(x),\label{Defcomdcd}
\end{eqnarray}

With the aid of the commutator relations it is straightforward, although cumbersome, to obtain the integrability conditions for the equations (\ref{e0u}-\ref{ZU3}). The resulting \emph{algebraic} relations between the main unknowns $\u, \uc, \U_3, \theta$, the variables which depend on $\mu$ only and the basic variables, are presented in Appendix 3.

Using the integrability conditions of Appendix 3 one can eliminate $\beta$ from the derivatives of (\ref{z_u_relation}) and thereby obtain the following (sometimes useful) expressions for the derivatives of $z$\footnote{Alternatively the $z$-derivatives can be obtained directly from the Einstein field equations and Jacobi equations: this procedure was followed in Appendix A of \cite{VdB-Radu2016} in order to show that the lhs of (\ref{z_u_relation}) is indeed basic and thereby obtain (\ref{eq5},\ref{eq10},\ref{eq6})}.

\begin{eqnarray}
\partial_0 (z) &=& \half i p'(2-9 G) \frac{\lambda}{\ez} \O \theta \u + [(p'-\tfrac{2}{3})\theta+\half i (9p'-1)\frac{\lambda}{\ez} \O] z \label{e0z}
\end{eqnarray}

\begin{eqnarray}
\et (z) &=& \frac{3 p'}{(1+3 p')^2}[-\epsilon G' (1+3 p') - G^2 +\tfrac{2G}{3}  (9 {p'}^2+3 p' +1) ] \theta \u^2 \nonumber \\
& & +(6 \frac{p'}{1+3 p'} G -2 p' +\tfrac{2}{3}) z \u  +9 \Etwo \frac{ p' }{(1+3 p')^2}(G -2) \theta \nonumber \\
& & +\tfrac{3}{2}(\rho+\rhoc) \O \pi\frac{\lambda}{\ez} \label{etz}
\end{eqnarray}

\begin{eqnarray}
\etd (z) &=& \frac{p'}{1+3 p' }( G - 1- 3 p' )( \zc \u+ z \uc) +i \frac{\lambda {p'}}{\ez} (2 G-1)  \O  (\modU+\U_3^2) \nonumber \\
& & -\frac{p'}{(1+3 p' )^2}[ (1+3 p') \epsilon G' + G^2 -\tfrac{2}{3}(9 {p'}^2+3 p' +1) G ] \theta (\modU-2 \U_3^2) \nonumber \\
& & -(\rho+\rhoc) \frac{\lambda}{\ez} (3 \frac{p' G}{1+3 p'}  +\half) \O \U_3-\tfrac{1}{3} i \frac{3 p'-2}{\lambda \ez} \O \theta^2 \nonumber \\
& &+\frac{2}{\ez^2 (1+3p')^2}[6 p'G  +\tfrac{1}{3} (81 {p'}^3+81 {p'}^2-9 p' -5)] \O ^2 \theta \nonumber\\
& & - \Ethree \frac{p'(3 G -2)}{(1+3 p' )^2} \theta-i \frac{(9 p' +4) \lambda}{\ez^3} \O^3 \nonumber \\
& & +i \frac{\O}{4\ez \lambda} [12 \ez  \lambda^4 +4 \lambda^2 p' \J+3 (\rho +\rhoc) \rho \lambda^2 +4 \mu (3 p' -2)]\label{etdz}
\end{eqnarray}

\begin{eqnarray}
\ZZ(z) &=& \frac{3 p'}{(1+3 p')^2}[-\epsilon G' (1+3 p') - G^2 +\tfrac{2G}{3}  (9 {p'}^2+3 p' +1) ] \theta \U_3 \u \nonumber \\
& & + (3 \frac{p'}{1+3 p'} G -p' +\tfrac{1}{3})(\frac{3 \lambda}{4 \ez} \O (\rho+\rhoc ) \u+ z \U_3) \nonumber \\
& & +\Eone\frac{3 p'}{(1+3 p')^2} (3 G -2 ) \theta+\frac{3\lambda \O}{4\ez}  (\rho+\rhoc )\kappa \label{ZZz}
\end{eqnarray}

\section{A useful lemma}\label{sectionlemma}

When trying to prove the conjecture one often encounters a situation where some  $\mu$-dependent linear combination of basic quantities vanishes: $\Sigma _{i=1}^n a_i(\mu) b_i = 0$,
with $\partial_0(b_i)=0$ ($i=1 \ldots n$). From the equations (\ref{e0mu},\ref{etmu}) useful information can then be extracted:

\begin{lm}\label{lincolemma}
If $a_i \neq 0$ are differentiable functions of $\mu$ and $b_i \neq 0$ are basic functions ($i=1 \ldots n$) with $\theta\neq 0$, then $\Sigma _{i=1}^n a_i  b_i = 0 \Rightarrow $
\begin{equation*} \exists r \  0 \leq r < n: \ \textrm{c-rank}\,(a_1, \ldots a_n)=r \textrm{ and  c-rank}\,(b_1, \ldots b_n)=n-r,
\end{equation*}
\end{lm}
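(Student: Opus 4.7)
The key asymmetry to exploit is that the $a_i$ depend only on $\mu$ while the $b_i$ are invariants of the flow. The plan is to differentiate the given identity along $\partial_0$ to generate an infinite family of derived identities, and then to read off the c-rank information from a Wronskian-type rank argument.

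Since each $b_i$ is basic, $\partial_0 b_i = 0$, and by (\ref{e0mu}) one has $\partial_0 a_i(\mu) = -(\mu+p)\theta\, a_i'(\mu)$. The hypotheses $\theta \neq 0$ and $p+\mu \neq 0$ (the latter built into the formalism) let me divide out this factor, so applying $\partial_0$ to $\sum_i a_i b_i = 0$ yields $\sum_i a_i'(\mu)\, b_i = 0$. Iterating, $\sum_i a_i^{(k)}(\mu)\, b_i = 0$ for every integer $k \geq 0$.

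Now set $r = \textrm{c-rank}(a_1,\ldots,a_n)$, choose an $\mathbb{R}$-basis $\{\alpha_1,\ldots,\alpha_r\}$ of the span of the $a_i$'s, and write $a_i = \sum_{l=1}^r c_{il}\alpha_l$ with constants $c_{il}$ forming a rank-$r$ matrix. The original identity becomes $\sum_{l=1}^r \alpha_l C_l = 0$ with $C_l := \sum_i c_{il} b_i$ basic. Iterating $\partial_0$ once more gives $\sum_l \alpha_l^{(k)} C_l = 0$ for all $k$; since the $\alpha_l$ are linearly independent over $\mathbb{R}$, at a generic value of $\mu$ their Wronskian-type matrix $(\alpha_l^{(k)})_{k=0,\ldots,r-1}$ has maximal rank $r$, forcing $C_l = 0$ for every $l$. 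These are $r$ linearly independent constant-coefficient relations $\sum_i c_{il} b_i = 0$ among the $b_i$, so $\textrm{c-rank}(b_1,\ldots,b_n) \leq n-r$.

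The converse inequality is the step I expect to require the most care, since the forward argument is inherently one-directional: I would obtain it by running the dual game, observing that any additional constant relation $\sum_i \lambda_i b_i = 0$ not already implied by the $C_l = 0$ would, combined with the original identity and its $\partial_0$-iterates, force a non-trivial constant relation among the $\alpha_l$, contradicting their linear independence. The main technical point is to keep the iteration non-degenerate: $\theta \neq 0$ and $p+\mu \neq 0$ are precisely what prevent the $\partial_0$-chain from collapsing, and a Wronskian of order $n-1$ suffices by the usual dimension bound. The net result is the complementary partition asserted by the lemma, with $r$ controlling the $\mu$-dependent side and $n-r$ the basic side.
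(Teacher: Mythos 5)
Your first three paragraphs are correct and amount to a Wronskian reformulation of what the paper disposes of in one line as a ``separation of variables'' induction on $n$: iterating $\partial_0$ and dividing each time by $\partial_0\mu=-(\mu+p)\theta\neq 0$ gives $\sum_i a_i^{(k)}b_i=0$ for all $k$, and writing the $a_i$ in a basis $\alpha_1,\dots,\alpha_r$ of their span forces the basic combinations $C_l=\sum_i c_{il}b_i$ to vanish, yielding $r$ independent constant-coefficient relations among the $b_i$, hence $\textrm{c-rank}\,(b_1,\dots,b_n)\le n-r$, and also $r<n$ (a rank-$n$ system $C_1=\dots=C_n=0$ would force all $b_i=0$). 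Two caveats, both shared with the paper's own sketch: for merely differentiable, non-analytic $\alpha_l$, linear independence does not guarantee a nondegenerate Wronskian matrix anywhere (Peano's classical example), so strictly one should either run the paper's inductive elimination or invoke analyticity/genericity; and identifying c-rank with the dimension of the $\mathbb{R}$-span fails under the all-constant convention $r=0$, an edge case the lemma itself does not treat consistently.

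The genuine gap is your ``converse inequality'', and you were right to flag it as the dangerous step: it cannot be closed, because the asserted equality $\textrm{c-rank}\,(b_1,\dots,b_n)=n-r$ is false as stated. Take $n=4$, $a_3=a_1$, $a_4=a_2$ with $a_1,a_2$ linearly independent functions of $\mu$, and $b_1=B$, $b_2=2B$, $b_3=-B$, $b_4=-2B$ with $B$ a nonvanishing basic function: then $\sum_i a_ib_i=0$ and $\textrm{c-rank}\,(a_1,\dots,a_4)=2$, but $\textrm{c-rank}\,(b_1,\dots,b_4)=1\neq n-r$. Your proposed ``dual game'' breaks precisely here: the extra constant relation $b_2-2b_1=0$ is compatible with $C_1=C_2=0$ and with every $\partial_0$-iterate of the identity, and imposes no constraint whatever on the $\alpha_l$. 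What is true, and what the paper actually uses in section \ref{section_app2} (where the lemma only serves to bound $\textrm{c-rank}\,(1,\O^2,\J,\mathfrak{B})$ from above before the case split), is the one-sided statement you did establish: $r<n$ together with $\textrm{c-rank}\,(b_1,\dots,b_n)\le n-r$. The paper's own proof --- the $n=2$ base case, where equality does hold for non-constant $a_i$, ``plus induction'' --- likewise only delivers the inequality once $n\ge 3$. So treat your forward direction as the proof, and the equality sign in the statement as an overstatement rather than as a step you failed to supply.
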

\noindent where a set of functions $f_i$ is defined to be of c-rank $r$ if a permutation $\pi$ of $(1\ldots n)$ exists such that $f_{\pi(1)} \ldots f_{\pi(r)}$ are linearly independent and
$f_{\pi(r+1)} \ldots f_{\pi(n)}$ are \emph{constant} linear combinations of $f_{\pi(1)} \ldots f_{\pi(r)}$. When all $f_i$ are constant, we say that $\textrm{c-rank}\, (f_1, \ldots f_n)=0$.\\
This is nothing but a `separation of varables' property, which is trivially true for $n=2$ ($a_1 b_1 + a_2 b_2 = 0$ implying $\w{e}_0(a_1 /a_2) =0$ and hence c-rank$(a_1,\,a_2)=1$) and which holds by induction for all other values of $n$.

\section{Application 1}\label{section_CWtheorem1}

As a first application of the formalism, as well as for completeness\footnote{The `parallel case' deserves particular attention, as it has been relied on in many subsequent proofs.}, I present a simple proof of the conjecture for the case where
the vorticity and acceleration vectors are parallel to each other (see \cite{WhiteCollins} for the first tetrad-based proof, or \cite{SenSopSze,Sikhonde_Dunsby,Sikhonde_thesis} for a covariant proof):


\begin{proof}
When the vorticity and acceleration vectors are parallel to each other, we have $\u=0$ and hence\footnote{throughout we also assume $p'\neq0\neq 3 p'+1$: see the introduction for these classical cases}, by (\ref{e0u}-\ref{ZZu}),
\begin{eqnarray}
\fl z = 0,\  2\U_3\pi  + \frac{3}{1+3 p'}\Etwo = 0,\  \U_3\kappa  + \frac{3}{1+3 p'}\Eone = 0, \label{paralkeyabc} \\
\fl \frac{\theta^2}{3 \lambda^2} -  \frac{6 G + 3 p' + 1}{3(1+3 p')} \U_3^2 +i\frac{\rho-\rhoc}{2}\U_3+ \frac{9 p' + 7}{\ez^2 (1 + 3 p')} \O^2
 - \frac{\mu}{\lambda^2} - \frac{\J}{3} - \frac{\Ethree}{1 + 3 p'} = 0 , \label{paralkeyd} \\
\fl \theta = \U_3 \lambda \ez \frac{\rho+\rhoc}{4 \O}. \label{paralkey1}
\end{eqnarray}

It follows that $(\rho+\rhoc )\U_3 \neq0$, as otherwise $\mu$ or $\theta$ would be non-constant functions\footnote{Note that $\theta=constant$ would imply, by (\ref{apZtheta}), $\rho+\rhoc=0$ and hence, by (\ref{paralkey1}), $\theta=0$.} with a vanishing spatial gradient and hence the vorticity would be 0. Notice also the simple relation following
from (\ref{apsol4}),
\begin{equation}\label{paralkey0}
\Theta  = \frac{3}{16} (\rho+\rhoc)^2 \lambda^2,
\end{equation}
while $\beta=0$ and $\ZZ(\rho+\rhoc)=0$ follow from (\ref{z_u_relation}) and (\ref{beq2}).

Evaluating $\ZZ(\ref{paralkey1})$ (using (\ref{Raych}) and  (\ref{paralkeyd}) to eliminate $\Ethree$ and $\J$) one obtains
\begin{eqnarray}
\fl \ez^2 (3 G-3 p'-2) p' \lambda^2 \U_3^4+[\ez^2 (\theta^2 - 3 \mu )+\tfrac{9}{2} \lambda^4 \ez^3-3 \O^2 (3 p'+2) \lambda^2] \U_3^2+9 \O^2 \theta^2
= 0 \label{paralmain},
\end{eqnarray}
the $\et$ derivative of which (using (\ref{paralkeyabc}) to eliminate $\Eone$) gives

\begin{eqnarray}
\kappa [-\frac{3}{2}\lambda^4 \ez^3 \U_3^2+ \U_3^2(\mu -\tfrac{1}{3}\theta^2) \ez^2-3\O^2\theta^2] = 0. \label{paraldmain}
\end{eqnarray}
On the other hand $\partial_0 (\ref{paralkeyd})$ implies
\begin{eqnarray}
\kappa [(2-3 G)\ez^2 p'\U_3^2+3 \O^2 (1+3 p')] = 0, \label{paralkeyd0}
\end{eqnarray}
which, together with (\ref{paraldmain}), shows that $\kappa=0$, as otherwise elimination of $G$ and $\theta$ from (\ref{paralmain},\ref{paraldmain},\ref{paralkeyd0}) would imply $\U_3^2 ({p'}^2\U_3^2\ez^2 + \O^2)=0$ and hence $\U_3=0$.

With $\kappa=0$ it follows that $\rho$ cannot be real: otherwise (\ref{paralkey1}) would imply $\theta = \U_3 \lambda \ez \frac{\rho}{2 \O}$, which, after eliminating $G, G'$ from $\partial_0(\ref{paralmain})$ and (\ref{etdz}), would lead to $\lambda^2 \ez \theta=0$.

This however is in contradiction with (\ref{eq8}), which (eliminating the $\Ethree$ and $\J$ terms as before) gives
\begin{eqnarray}
3\lambda \ez (\pic \Etwo - \pi \Etwoc) - 2\O (1 + 3p')(\rhoc -\rho) \theta = 0,
\end{eqnarray}
with $\pic \Etwo - \pi \Etwoc$ being 0 as a consequence of (\ref{paralkeyabc})[2].

\end{proof}

\section{Application 2}\label{section_Muzitheorem1}

As a second application of the formalism I give an alternative proof of Theorem \ref{Muzi_th1}. This generalizes the `parallel case' of the previous section to the case where the complex acceleration is of the form $\u = f \mathfrak{B}$ with $f = f(\mu)$ real and $\mathfrak{B}$ (complex) basic\footnote{This was first demonstrated for the special case $f(\mu)=\lambda p'$ in \cite{VdB-Radu2016}, with a covariant proof for arbitrary $f(\mu)$) being presented in \cite{Sikhonde_thesis}).}.
\begin{proof}
If $\u = f \mathfrak{B}$ with $f=f(\mu)$ real and $\mathfrak{B}$ (complex) basic (hence $w(\B)=1$), then $\partial_0(\frac{\u}{f})=0$ implies
\begin{equation}
 F\theta \u + z = 0 \textrm{ with } F =-\frac{f'}{f}\lambda^4\ez + p' \label{app1_1}.
\end{equation}
The orthogonal projections (in the 2-plane orthogonal to the vorticity) of the spatial gradients of $\mu$ and $\log(\theta)$ being aligned, with a coefficient ($F$) which is a function of $\mu$ only, suggests one to define a function $S$ by
\begin{equation}
S=\log(\theta) -\int \frac{F}{\mu+p} \ud \mu,
\end{equation}
implying
\begin{eqnarray}
\partial_0 (S) = \frac{\Theta}{\theta}+\theta F,\\
\et (S) = \etd (S) = 0,
\ZZ (S) = 3\frac{ \lambda \O (\rho + \overline{\rho})}{4 \theta \ez} + \U_3 F .
\end{eqnarray}
I will show that $\ZZ(S)=0$, implying that either $S$ is constant and therefore $\theta=\theta(\mu)$, whence (see \cite{Lang,Langth} or section \ref{section_Langtheorem1}) $\omega \theta = 0$, or that $S$ is a non-constant function with vanishing spatial gradient, implying $\omega=0$. First consider the special case $F=0$:\\

(i) When $F=0$ (\ref{z_u_relation}) gives $\u = -3\frac{\beta}{\O}\frac{1}{9p' - 1}$ and hence $f=(9 p'-1)^{-1}$. Evaluating herewith (\ref{e0u}) shows that $9 G -2 =0$, after which (\ref{etu},\ref{ZZu}) yield two polynomials, linear in $\U_3$ and with coefficients being functions of $p'$ and basic functions. Elimination of $\U_3$ results in a polynomial of 4th degree in $p'$, with basic coefficients, implying a linear EOS unless all coefficients are 0. Some straightforward algebra leads then to
\begin{equation}
(54\O^2 \Etwo+54 \beta^2)p' -6\Etwo\O^2+19\beta^2=0
\end{equation}
and hence $\beta$=0, such that the acceleration becomes parallel with the vorticity.\\

ii) When $F\neq 0$ (\ref{z_u_relation}) and (\ref{app1_1}) imply
\begin{equation}
\theta = -i \frac{\lambda(9p' - 1)}{2 F\ez}\O - 3 i \frac{\beta}{\mathfrak{B}} \frac{\lambda f}{2 F\ez}, \label{app1_2}
\end{equation}
the real part of which reads
\begin{equation}
\tfrac{3}{2}\frac{\beta \overline{\mathfrak{B}}+\betac \mathfrak{B}}{\O \mathfrak{B} \overline{\mathfrak{B}}} +\frac{9 p'-1}{f}=0
\end{equation}
and hence $f=9 p'-1$ (modulo a constant which is absorbed in $\mathfrak{B}$). Then $F = \frac{ p'(2-9 G)}{9 p' - 1}$ and, by (\ref{app1_2}),
\begin{equation}
\theta = \frac{i}{2} \frac{ \lambda (9 {p'} - 1)^2 }{{p'} (9 G - 2) \ez}  (\O + 3 \frac{\beta}{\mathfrak{B}}) .\label{app1_3}
\end{equation}
Propagating this along $\w{e}_0$ one finds
\begin{equation}
\Theta = \theta^2\left(  \frac{9 \epsilon G'}{2-9 G}-\frac{9 G {p'}+G+5 {p'}-1}{9 {p'}-1}\right) ,
\end{equation}
which, when substituted in the $[\partial_0,\, \et]S$ and $[\partial_0,\, \ZZ]S$ commutator relations, yields the following algebraic relations:
\begin{eqnarray}
 9 (\epsilon G')'  \lambda^4 \ez-\frac{81}{9 G-2} (\epsilon G')^2-\frac{243 G {p'}+81 {p'}^2+9 G-63 {p'}-2}{9 {p'}-1} \epsilon G'\nonumber \\
 +\frac{9 G-2}{(9 {p'}-1)^2} (3 G+3 {p'}-1) (54 G {p'}+12 G+9 {p'}-5) {p'}= 0 \label{app1_eos1} \\
 \left( (486 {p'}+54) G^2-(486 {p'}-54) \epsilon G'+(243 {p'}^2-27 {p'}-48) G \right. \nonumber \\
\left. -54 {p'}^2-18 {p'}+8 \right) \ZZ(S) = 0 .\label{app1_eos2}
\end{eqnarray}
When the first factor of (\ref{app1_eos2}) is 0, propagating it and using (\ref{app1_eos1}) to eliminate successively the $G'' $ and $G'$ terms, one obtains a linear EOS. Hence $\ZZ(S)=0$, thereby finishing the proof.
\end{proof}

\section{Application 3}\label{section_Langtheorem1}

This section presents a compact proof of the validity of the conjecture when matter density $\mu$ and expansion $\theta$ are functionally related (see \cite{Lang,Langth, Sopuerta1998} for the original proofs, as well as the remarks in the introduction). Use will be made of the property that $\u=0$ implies the validity of the conjecture\footnote{See \cite{WhiteCollins} for the first tetrad-based proof, \cite{SenSopSze,Sikhonde_Dunsby,Sikhonde_thesis} for a covariant proof.}.

\begin{proof}
We begin by expressing that $\theta=\theta(\mu)$ by the relations
\begin{equation}
\partial_0(\theta) \et(\mu)-\partial_0(\mu) \et(\theta)= \ZZ(\theta)\et(\mu)-\et(\theta) \ZZ(mu)=0,
\end{equation}
which, using (\ref{e0mu}-\ref{ZZmu}) and (\ref{Raych}-\ref{ZZtheta}) imply
\begin{eqnarray}
z = \frac{\Theta}{\theta} \u , \label{Langkey3} \\
\ez \Theta \U_3 = \tfrac{3}{4}\lambda \theta (\rho+\rhoc) \O. \label{Langkey1}
\end{eqnarray}

Substituting (\ref{Langkey3}) in (\ref{z_u_relation}) one first notices that the basic function $\beta \neq 0$. In fact, when $\beta=0$ the real and imaginary parts of (\ref{z_u_relation}) tell us that $\Theta= z = \rhoc +\rho = p' -\tfrac{1}{9}=0$ and hence, by (\ref{ZZtheta}), $\theta=constant$. Herewith the Raychaudhuri equation (\ref{Raych}) reduces to
\begin{equation}\label{LangRaych}
405 \frac{\lambda^2\O^2}{\ez^2}-243\epsilon -9\lambda^2\J+5\lambda^2(\modU+\U_3^2)-45\theta^2+135\mu = 0,
\end{equation}
which, after propagating three times along $\partial_0$ and eliminating $\modU+\U_3^2$, successively results in
\begin{eqnarray}
135\frac{\O^2}{\ez^2}-\J-\tfrac{27}{2}\lambda^2\ez+\frac{10}{\lambda^2}(\theta^2-3 \mu) = 0 ,\\
45\O^2-27\ez^3\lambda^2-5\frac{\ez^2}{\lambda^2}(\theta^2-3\mu) = 0,\\
\mu=\tfrac{9}{10}\ez\lambda^4+\tfrac{1}{3}\theta^2.
\end{eqnarray}
Solving for $\J$ and $\O$ one obtains $\J=0$ and $\O^2=\tfrac{3}{10} \ez^3 \lambda^2$, thereby reducing (\ref{LangRaych}) to $\modU+\U_3^2=0$ and hence $\O=0$ by (\ref{etdu}).

Continuing with $\beta\neq0$ it follows from (\ref{Langkey3}) and (\ref{z_u_relation}) that
\begin{equation}
\u = - 3 \beta [\O(9 p'-1) +2 i \ez \Theta (\lambda\theta)^{-1}]^{-1}  \label{Langu}
\end{equation}
with $\Theta= -\theta \theta' \epsilon$ by (\ref{Thetadef}). In what follows I will be using (i) the real and imaginary parts of (\ref{apsol2}),
\begin{eqnarray}
p'+ \tfrac{1}{3} +\epsilon \frac{\theta''}{\theta'}+\epsilon \frac{\theta'}{\theta} = 0 \label{Langkey8},\\
p'(9 G -1)+\epsilon \frac{\theta'}{\theta}(9p'-1) = 0 , \label{Langkey9}
\end{eqnarray}
together with their propagations (leading to expressions for $\theta '''$ and $G'$) and (ii) the evolution equations (\ref{e0u}, \ref{e0U3}), which under the conditions (\ref{Langkey3},\ref{Langkey1},\ref{Langu}) simplify to
\begin{eqnarray}
\partial_0(\u) =  \u (\theta p'-\epsilon \theta'),\label{Langevolsu}\\
\partial_0(\modU) = 2 \modU (\theta p'-\epsilon \theta'),\label{LangevolsmodU}\\
\partial_0(\U_3) =  \U_3 (\theta p'-\epsilon \theta') \label{LangevolsU3}.
\end{eqnarray}
We now act with the $\ZZ$ operator on (\ref{Langkey1}), using (\ref{beq2}) and (\ref{beq6}) to get rid of the $\ZZ(\rho+\rhoc)$ term and eliminate $\J$ by means of (\ref{Raych}), to obtain a polynomial (with $\mu$-dependent coefficients) coupling the basic variables $\Ethree$ and $\O$ to the variable $\modU+\U_3^2$. A second polynomial in the same variables is arrived at by repeating this procedure on the real part of $\etd (\ref{Langevolsu})$. Elimination of $\Ethree$ from these two yields a relation, which is identical\footnote{The explanation being that all integrability conditions --including the ones used by Lang and Collins to arrive at their equations (8) and (9)-- have already been incorporated in the machinery to arrive at the present equations (\ref{e0u}-\ref{beq4}) and (\ref{beq2}-\ref{eq3}). However the last $\mu-p$ term in equation (9) of \cite{Lang} should be replaced by $\mu+3p$: as one can verify by the rest of the calculations in \cite{Lang}, this is just a typo.} (modulo a typo) to equation (9) of \cite{Lang},
\begin{eqnarray}
\theta' \ez \lambda^2 (\theta' \lambda^4 \ez+18 {p'}^2 \theta-7 p' \theta)(\modU+\U_3^2)
 +\frac{9 \theta}{\ez^2\lambda^2}[(9 {p'}-1)p' \theta-2 \theta' \epsilon] \O^2\nonumber \\
-\frac{3 \theta\theta'\ez}{2} (6  \theta \theta' \epsilon -2 \theta^2-9 \epsilon +6 \mu) = 0. \label{Langtrick1plus2}
\end{eqnarray}
Propagating (\ref{Langtrick1plus2})\footnote{Lang and Collins as well as Sopuerta use an extra commutator relation to arrive at a different second polynomial, but after elimination of $\modU+\U_3^2$, their result is identical with (\ref{Langtemp1}).} a similar polynomial is obtained, which, after elimination of the $\modU+\U_3^2$ term, leads to the following key equation, which, using (\ref{Langkey8},\ref{Langkey9}) to remove the higher order derivatives of $\theta$ and $p$, turns out to be identical with equation (43) of \cite{Lang}:
\begin{eqnarray}
\fl [45 \epsilon^2 (1+3 p') {\theta'}^4-45 \epsilon \theta p' (81 p'^2-36p'+7) {\theta'}^3+135 p'^3 \theta^2 (9 p'-1) {\theta'}^2] \O^2 \nonumber \\
\fl -\lambda^6 \ez^4 [(243 p' -9) \epsilon +(405 p'^2 -60 p'-5) \theta^2 +(15-135 p') \mu ] {\theta'}^4  +15 \lambda^{10} \ez^5 \theta (9 p'-1) {\theta'}^5 \nonumber \\
\fl -\lambda^2 \ez^3\theta p' [(729 p'^2 -648 p' +63) \epsilon+(35-135 p') \theta^2-(105-405 p') \mu] {\theta'}^3 =0 .\label{Langtemp1}
\end{eqnarray}
At this point one may follow the analysis of \cite{Lang}, but the present formalism allows for a more compact approach, particularly when dealing with the special case $p'=\tfrac{1}{9}$, $\theta'\neq 0$:\\

First notice that substitution of $\theta'= 0$ in (\ref{Langtrick1plus2}) implies $9 p'-1 = 0$ and hence, by (\ref{z_u_relation}), $\beta=0$, which was already excluded. Also notice that, when $p'$ is constant, necessarily
$9 p'-1 =0$: in fact, substituting $p'' = 0$ in (\ref{Langkey9}) yields $(9p'-1)(\epsilon \theta' -\theta p') =0$. The second factor cannot be 0, as its propagation would imply $p'= \tfrac{1}{3}$, after which integration of (\ref{Langkey8},\ref{Langkey9}) results in $\theta'= c_0 \lambda^{-3}$ ($c_0$ a non-0 constant). Substituting this in (\ref{Langtrick1plus2}) would give $3 \epsilon - 2 \mu =0$, in contradiction with $p'= -\tfrac{1}{3}$.\\
Continuing with $p'\neq \tfrac{1}{3}$ one eliminates $\mu$ from (\ref{Langtemp1}) and its propagation to obtain an equation which factorizes as $\mathcal{F}_1\cdot \mathcal{F}_2=0$, with
\begin{eqnarray}
\fl \mathcal{F}_1 &=& \epsilon \frac{\theta'}{\theta}+p'(18 p'-7) = 0, \\ \nonumber
\fl \mathcal{F}_2 &=& [20 \epsilon^3 (9 p'-1) {\theta'}^3-20 \epsilon^2 \theta (81 p'^3-18p'^2+1) {\theta'}^2-20 \epsilon \theta^2  p' (81 p'^3-144 p'^2+54 p'-7) \theta' \nonumber \\
\fl & & -20 p'^3 \theta^3 (9 p'-1)] \O^2 -6 \lambda^{14} \ez^6 (9 p'-1) {\theta'}^3-\lambda^{10} \ez^5 \theta (3 p'-1) (243 p'^2+27 p'+2) {\theta'}^2 \nonumber \\
\fl & &  +\lambda^6 \ez^4 p' \theta^2 (729 {p'}^3-27 p'-14) \theta' .
\end{eqnarray}
It is easy to check that $\mathcal{F}_1=0$ implies $p'= \tfrac{1}{3}$. On the other hand, propagating $\mathcal{F}_2=0$ and eliminating $\O$ results\footnote{this is the same procedure as followed in \cite{Lang}} in a polynomial $\mathcal{P}(\theta,\theta',\ez, \lambda, p')$ being 0, which, eliminating $\theta$ from $\mathcal{P}=\mathcal{P}'=0$, implies $p'$ is constant and hence, as shown above, $p'= \tfrac{1}{9}$.

This leaves us with only the case $\theta' \neq 0$ and $p'= \tfrac{1}{9}$ to be investigated\footnote{Note that one cannot use \cite{NorbertCQG1999} (in which the \emph{general} inconsistence of $p'=\tfrac{1}{9}$ was demonstrated) as \cite{NorbertCQG1999} uses the result of \cite{Lang} as a steppingstone.}: equation (\ref{Langtemp1}) simplifies then to
\begin{eqnarray}
\lambda (\theta - 3 \epsilon \theta') \O^2 +\frac{\ez^2}{90 \lambda}[\epsilon \theta'(81 \epsilon-30 \theta^2)+10 \theta^3-30 \mu \theta]=0,
\end{eqnarray}
which, after propagation and elimination of $\mu$ yields
\begin{eqnarray}
(9 \epsilon \theta'-5 \theta)(10 \O^2 -3 \lambda^2 \ez^3)(3 \epsilon \theta' - \theta)=0. \label{Lang123}
\end{eqnarray}
By propagation it is easy to see that the first factor of (\ref{Lang123}) being 0 is incompatible with $p'=\tfrac{1}{9}$. Substitution of the second factor in (\ref{Langtemp1}) shows that
\begin{equation}
10\O^2-3 \ez^3 \lambda^2=0 \textrm{ and } \mu=\tfrac{1}{3}\theta^2+\epsilon(\tfrac{9}{10}-\theta \theta'),
\end{equation}
by which (\ref{Langtrick1plus2}) would be reduced to $(9 \epsilon \theta'-5 \theta)(\modU+\U_3^2)=0$, thereby leading us back to the first case. This leaves one with the third possibility,
$3 \epsilon \theta' - \theta = 0$,
substitution of which in (\ref{Langtemp1}) gives
$\mu = \tfrac{9}{10}\epsilon$ (hence $\mu=9 p$), thereby simplifying (\ref{Langu}) and (\ref{Langtemp1}) to
\begin{eqnarray}
\u =-\tfrac{3}{2}i \beta (\ez^2\lambda^3 \theta')^{-1}, \\
10\ez^2(\modU+\U_3^2)+810\O^2-243 \ez^3 \lambda^2=0.
\end{eqnarray}
Herewith (\ref{Raych}) reduces to $\J - 9 \ez^2 \lambda^6{\theta'}^2=0$, implying $\J=9 c^2$ and $\theta'= c (\ez \lambda^3)^{-1}$ with $c$ a non-0 constant. By (\ref{Langu}) one has then $\u = -\tfrac{3}{2} i \beta (c \ez)^{-1}$, after which it follows from (\ref{Langkey1}) that $\lambda^2\ez^3$ is a constant given by
\begin{eqnarray}
216 c^2  \lambda^2 \ez^3 = [720 c^2 + 5(\rho + \rhoc)^2]\O^2  + 20 |\beta|^2 .
\end{eqnarray}
Together with the expressions obtained for $\mu, \theta'$ and $\u$ (\ref{eq2}) simplifies then to
\begin{eqnarray}
18 c^2 [3 i  \O (\rho + \rhoc)\Eone + \Ethree \beta + 3\betac \Etwo]\ez^2 + 144 i (18\O\kappa - 19\beta)\O c^3\ez \nonumber \\
-\beta [ 72\O^2 c^2 + 44 |\beta |^2 + 11 \O^2(\rho + \rhoc)^2] = 0.
\end{eqnarray}
This is a not-identically zero polynomial in $\ez$ with basic coefficients, implying that $\ez$ is basic and hence, as $\partial_0 (\ez) = \tfrac{2}{9} \ez \theta$, $\theta=0$.
\end{proof}

\section{The case of a Killing vector along the vorticity}\label{sectionKV}

When a Killing vector exists along the vorticity, then $\U_3=0$ and also ($\O, \theta, \J, \Ethree$ being invariants)
\begin{equation}
\ZZ(\O), \ZZ(\theta), \ZZ(\J), \ZZ(\Ethree)= 0.
\end{equation}
Substituting this in (\ref{e0U3},\ref{ZOmega}) it follows that
\begin{equation}
\rho=0,
\end{equation}
while (\ref{etU3}) implies
\begin{equation}\label{Killinglemma1}
\frac{3}{1+3p'} \Eone -\pi \uc =0.
\end{equation}
If $\pi\neq0 $ the propagation of (\ref{Killinglemma1}) along $\partial_0$ would result in the spatial gradients of $\theta$ and $\mu$ being related by
\begin{equation}
z = \frac{p'(3 G-2)}{1+3 p'} \u \theta,
\end{equation}
implying that  that $\theta=0$ or the existence of a non-constant function (of $\theta$ and $\mu$) with vanishing spatial gradient, such that $\omega=0$. Therefore, if a rotating and expanding shearfree perfect fluid exists and admits a Killing vector along the vorticity, then necessarily also\footnote{this is basically the argumentation given also in \cite{VdB-Radu2016}}
\begin{equation}
\pi=\Eone =0,
\end{equation}
after which
(\ref{eq7},\ref{eq3},\ref{eq10}) imply
\begin{equation}
\ZZ(\kappa)= \ZZ(\beta)=\ZZ(\Etwo)=0,
\end{equation}
The $[\ZZ,\, \et ]$ commutator applied to $\theta$ shows then $\ZZ(z)=0$ and hence, by (\ref{z_u_relation}), $\ZZ(\u)=0$ as well. It follows that the $\ZZ$ operator is identically 0 'when applied to everything in sight'\footnote{as Alan Held used to say}.\\
Note that, by (\ref{ZU3}) one obtains also the following useful algebraic relation,
\begin{eqnarray}
 -( \frac{ p'G}{1+3 p'}+\tfrac{1}{6}) \modU - \tfrac{1}{2}(\kappac \u+ \kappa \uc) + \frac{\theta^2}{6 \lambda^2} \nonumber \\
 + \frac{9 p' - 5}{2 \ez^2 (1 + 3 p')} \O^2 - \frac{\mu}{2 \lambda^2} - \frac{\J}{6} + \frac{1}{1 + 3 p'} \Ethree =0.\label{special_eq1}
\end{eqnarray}

Next I prove a minor generalization of theorem \ref{th3} %
of section \ref{section_recaptheorems},
which will be used in section \ref{section_app2}.

\begin{lm}\label{betakappalemma}
If a barotropic shear-free perfect fluid admits a Killing vector along the vorticity and if $\beta$ and $\kappa$ are 0, then $\omega \theta=0$.
\end{lm}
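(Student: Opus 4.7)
My plan is to reduce the hypotheses to those of Theorem \ref{th3}, the only additional item to verify being $\et(\J)=0$ (equivalently, in the old notation of \cite{VdB-Radu2016}, $\partial_1\J=\partial_2\J=0$). From the Killing-vector discussion preceding the lemma we already know $\U_3=\rho=\pi=\Eone=0$ and that $\ZZ$ annihilates every quantity of interest. Substituting $\beta=\kappa=0$ into (\ref{dOmega},\ref{beq3},\ref{beq4}) yields at once
\begin{equation*}
\et(\O)=0,\qquad \Etwo=0,\qquad \Ethree=\tfrac{2}{3}\J .
\end{equation*}
Since $\O$ is basic ($\partial_0\O=0$) and now has vanishing $\et$-, $\etd$- and $\ZZ$-derivatives, $\O$ is in fact a (nonzero) constant. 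Equation (\ref{z_u_relation}) then gives $z=\tfrac{i(9p'-1)\lambda}{2\ez}\O\u$, and (\ref{e0u}) collapses to the purely multiplicative evolution $\partial_0\u=\u[p'\theta+ih(\mu)]$ with $h(\mu)=(9p'-1)\lambda\O/(2\ez)$.

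To obtain $\et(\J)=0$ I intend to exploit the commutator $[\etd,\et]$ applied to $\u$. Under the simplifications of the previous paragraph, (\ref{etu}) reduces to $\et(\u)=(\H+\tfrac{1}{3})\u^2$ and (\ref{etdu}) to
\begin{equation*}
\etd(\u)=\H\modU+\tfrac{\theta^2}{3\lambda^2}-2i\tfrac{\O\theta}{\lambda\ez}+\tfrac{(9p'+7)\O^2}{\ez^2(1+3p')}-\tfrac{\mu}{\lambda^2}-\tfrac{(p'+1)\J}{1+3p'},
\end{equation*}
while (\ref{Defcomdcd}) specialises to $[\etd,\et]\u=-4i\tfrac{\O}{\lambda\ez}\partial_0\u-\tfrac{2}{3}\J\,\u$. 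Matching the direct computation of $\etd\et(\u)-\et\etd(\u)$ with this specialised commutator isolates a term proportional to $(p'+1)/(1+3p')\cdot\et(\J)$; after eliminating $\J$ through the algebraic relation (\ref{special_eq1}) (which, with $\pi=\kappa=\rho=0$ and $\Ethree=\tfrac{2}{3}\J$, expresses $\J$ linearly in $\modU$, $\theta^2/\lambda^2$, $\O^2/\ez^2$ and $\mu/\lambda^2$) and tidying the $\theta^2$-dependent pieces via (\ref{Raych}), one is left with a relation linear in $\et(\J)$ whose coefficient is a (generically nonzero) function of $\mu$, forcing $\et(\J)=0$.

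Once $\et(\J)=0$ is secured, the assumptions of Theorem \ref{th3} (which in the complex notation amount to $\beta=\kappa=\Etwo=\et(\J)=0$ on top of the existence of a Killing vector along the vorticity) are all satisfied, and $\omega\theta=0$ follows. The principal obstacle I foresee lies in the previous paragraph: one must verify that the coefficient of $\et(\J)$ produced by the commutator matching does not vanish identically as a function of $\mu$. Because $(p'+1)/(1+3p')$, $G$, and $\H=3p'G/(1+3p')$ all feed into that coefficient, there is a risk that some exceptional EOS annihilates it; such cases will have to be dealt with separately, and should either force a linear EOS — already handled by Theorem \ref{th4} — or allow a direct collapse via propagation of the multiplicative evolution law for $\u$ established in the first paragraph.
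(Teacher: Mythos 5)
Your reduction to Theorem \ref{th3} is a reasonable idea, and the preliminary steps are all correct: $\beta=\kappa=0$ together with the Killing-vector conditions ($\U_3=\rho=\pi=\Eone=0$) do give $\Etwo=0$ from (\ref{beq3}), $\Ethree=\tfrac{2}{3}\J$ from (\ref{beq4}), $\et(\O)=0$ from (\ref{dOmega}) and hence $\O$ constant, and $z=\tfrac{i(9p'-1)\lambda}{2\ez}\O\u$ from (\ref{z_u_relation}). You have also correctly identified that the only hypothesis of Theorem \ref{th3} not yet secured is $\et(\J)=0$. The gap is in the step that is supposed to secure it. The commutator $[\etd,\et]\u$ is precisely equation (\ref{eq9}) of Appendix 3; under your substitutions it reduces to
\begin{equation*}
2\lambda\ez(1+3p')\,\et(\J)+12\lambda\ez {p'}^2 \J\,\u-2i(9Gp'-9{p'}^2+1)(1+3p')\O\,\u\,\theta-3\frac{\lambda(p'+1)(81{p'}^2-5)}{\ez}\O^2\u=0 .
\end{equation*}
This is indeed linear in $\et(\J)$ with a manifestly nonzero coefficient, but that only lets you \emph{solve} for $\et(\J)$: it expresses $\et(\J)$ as $\u$ times a weight-zero combination of $\theta$, $\J$, $\O^2$ and functions of $\mu$ which does not cancel. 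Nothing in your argument shows that right-hand side vanishes, so $\et(\J)=0$ does not follow, and substituting (\ref{special_eq1}) or (\ref{Raych}) only reshuffles those surviving terms. (Since $\et(\J)$ is basic and the right-hand side is not, one could try to extract more by propagating along $\partial_0$, but that is an additional, nontrivial chain of computations you have not carried out.)

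The paper's own proof avoids this entirely by using the other integrability condition, $[\partial_0,\et]\u$ (equation (\ref{eq5})). With $\kappa=\beta=\pi=\rho=\Etwo=\et(\beta)=0$ every term of (\ref{eq5}) is proportional to $\u^2$; dividing by $\u^2$ (the case $\u=0$ being the parallel case, already disposed of) leaves $A(\mu)\,\theta+iB(\mu)\,\O\lambda=0$ with $A,B$ real. Since $\theta\neq0$ and $\O\neq0$, both $A$ and $B$ vanish: $B=0$ fixes $G$ algebraically in terms of $p'$, $A=0$ fixes $\phi=(p+\mu)G'$ in terms of $G$ and $p'$, and eliminating these against the definitions of $G$ and $\phi$ forces a linear EOS, whereupon Theorem \ref{th4} gives $\omega\theta=0$. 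If you want to salvage your route you would either have to prove $\et(\J)=0$ by some genuinely new argument, or abandon the reduction to Theorem \ref{th3} and extract the $\mu$-only constraints directly, as the paper does.
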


\begin{proof}
Substituting $\beta=\kappa=0$ in (\ref{beq3}) it follows that $\Etwo=0$. Dividing (\ref{eq5}) by $\u^2$ one obtains
\begin{eqnarray}
  -\frac{2 p'}{(1+3 p' )^2 }(-18 G {p'}^2+3 G^2-6 G p'  -2 G+3 (3 p'+1) \epsilon G')\theta \nonumber \\
 + \frac{i}{\ez(1+3 p' )} (54 G {p'}^2-54 {p'}^3+6 G p' +9 {p'}^2+6 p' -1) \O \lambda =0.
\end{eqnarray}
Eliminating $G', G$ from the real and imaginary part, a linear EOS is obtained, after which theorem \ref{th4} implies $\omega \theta =0$.
\end{proof}

\section{Application 4}\label{section_app2}
When a Killing vector exists which is aligned with the vorticity, then likely any proof of the conjecture may first have to deal with the subcase $\kappa=0$, because of the special form taken by the algebraic relations (\ref{eq7}-\ref{eq6}).  As a second application of the formalism I prove the following,

\begin{te}
If a barotropic shear-free perfect fluid admits a Killing vector along the vorticity and if $\kappa$ is 0, then $\omega \theta=0$.
\end{te}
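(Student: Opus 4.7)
The plan is to combine the hypothesis $\kappa=0$ with the Killing-vector simplifications of section~\ref{sectionKV}, namely $\U_3=\rho=\pi=\Eone=0$ together with the vanishing of $\ZZ$ on every invariant object, and to extract enough algebraic information to force either $\beta=0$, in which case Lemma~\ref{betakappalemma} delivers $\omega\theta=0$, or a linear equation of state, in which case Theorem~\ref{th4} does. In view of Lemma~\ref{betakappalemma} we may assume throughout that $\beta\neq 0$.

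Two immediate algebraic consequences present themselves. With $\kappa=\pi=\rho=0$ equation (\ref{beq3}) collapses to $\Etwo=0$ and (\ref{beq4}) to $\Ethree=\tfrac{2}{3}\J$, while (\ref{dOmega}) reduces to $\et(\O)=\beta$ and (\ref{etu}) to $\et(\u)=(\H+\tfrac{1}{3})\u^2$. Substituting $\Etwo=0$ and $\Ethree=\tfrac{2}{3}\J$ into the algebraic identity (\ref{special_eq1}) yields a first relation of the form
\begin{equation}\label{firstalgrel}
A_1(\mu)\modU + A_2(\mu)\theta^2 + A_3(\mu)\O^2 + A_4(\mu)\mu + A_5(\mu)\J = 0,
\end{equation}
and a second relation of the same structure (but also linear in $\etd(\u)$) falls out of (\ref{etdu}) after the same substitutions. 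Moreover (\ref{z_u_relation}) now expresses $z=\et(\theta)$ linearly in $\u$, $\O$ and $\beta$, so that the $\et$-gradient of the expansion is algebraically determined.

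Next I would propagate (\ref{firstalgrel}) along $\partial_0$, using (\ref{e0mu}), (\ref{e0u}) and (\ref{Raych}), and apply the $[\partial_0,\et]$ commutator to $\u$; the expression (\ref{etz}) for $\et(z)$ simplifies drastically in the present setting (both its $\Etwo$ term and its $(\rho+\rhoc)$ term drop out). Each propagation manufactures a new polynomial identity in the scalar unknowns $\theta$, $\modU$, $\O^2$, $\J$ whose coefficients depend on $\mu$ only through $p'$, $G$, $\phi$ and $\phi'$. The algebraic integrability conditions of Appendix~3 of the present paper, once specialised to $\kappa=\rho=\pi=\U_3=\Etwo=\Eone=0$ and $\Ethree=\tfrac{2}{3}\J$, become purely algebraic and contribute further polynomial constraints.

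The closing argument is the familiar elimination strategy: separating real and imaginary parts and eliminating $\J$ (and the $\partial_0$-derivatives) from the resulting overdetermined system, one expects to arrive at a factorisation of the form $\mathcal{P}(\mu)\,\beta\,\mathcal{Q}=0$; either $\mathcal{P}$ is identically non-zero, forcing $\beta=0$ and contradicting our assumption (so Lemma~\ref{betakappalemma} applies), or $\mathcal{P}\equiv 0$ on an open $\mu$-interval, which via (\ref{notlinear}) forces a linear EOS, so that Theorem~\ref{th4} applies. The main obstacle is not conceptual but computational: the polynomial bookkeeping is substantial and the exceptional values of $p'$ at which individual coefficients in (\ref{firstalgrel}) degenerate must be handled separately, as in \cite{VdB-Radu2016, Slobodeanu2014}, most likely via symbolic computation.
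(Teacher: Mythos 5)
Your opening moves coincide with the paper's: with $\kappa=\pi=\rho=\Eone=\U_3=0$ you correctly obtain $\Etwo=0$ and $\Ethree=\smfrac{2}{3}\J$, and the dichotomy ``either $\beta=0$ (so Lemma \ref{betakappalemma} applies) or a linear EOS (so Theorem \ref{th4} applies)'' is exactly the one the paper exploits, phrased there as ``$\O$ may be assumed to be a non-constant basic quantity''. The gap lies in the closing argument. Propagation along $\partial_0$ followed by elimination does \emph{not} produce a factorisation $\mathcal{P}(\mu)\,\beta\,\mathcal{Q}=0$; what it produces are relations of the form $\sum_i a_i(\mu)\,b_i=0$ in which the $b_i$ are \emph{independent basic} functions ($1$, $\O^2$, $\J$, and the auxiliary basic quantity $\mathfrak{B}$ defined by $\etd(\beta)=\et(\betac)=\mathfrak{B}\O$, which your sketch omits but which unavoidably enters once second derivatives of $\beta$ and $\O$ appear). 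Basic functions are transparent to $\partial_0$-propagation, so no amount of further propagation eliminates them; the only available tool is the separation-of-variables Lemma \ref{lincolemma}, and that lemma does \emph{not} allow you to set every coefficient $a_i$ to zero --- it only constrains the c-rank, leaving open the possibility that the basic functions are \emph{constant} linear combinations of one another. This is precisely what forces the paper's five-case analysis of the relation (\ref{KV_EOS1}); the hardest branch (Case V, where $\J=c_2\O^2+c_4$ and $\mathfrak{B}=-c_1\O^2-c_3$ with real constants) cannot be resolved from the relations you list and requires injecting fresh information from (\ref{eq2}) via $\et(\J)=2c_2\O\beta$. As written, your elimination would terminate in a genuinely underdetermined system at this point.

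Two smaller omissions. First, extracting $\u$ from (\ref{eq6}) and its conjugate requires the determinant factor $6p'G+3p'+1$ to be non-zero, and the degenerate case $6p'G+3p'+1=0$ must be treated separately (the paper devotes a subsection to it). Second, your relation ``linear in $\etd(\u)$'' coming from (\ref{etdu}) is not algebraic; the genuinely algebraic input is supplied by the Appendix 3 integrability conditions, chiefly (\ref{eq6}) and (\ref{eq2}), which you mention only in passing. So the obstacle is not merely computational bookkeeping: the separation of $\mu$-dependent coefficients from basic functions, and the case analysis it entails, is the conceptual heart of the proof and is absent from your plan.
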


As the validity of the conjecture is known to hold\cite{VdB-Radu2016} for a \emph{linear} equation of state (EOS), the goal is to arrive at an inconsistency by deducing the existence of a linear EOS from $\omega \theta \neq0$.

\begin{proof}
Imposing the conditions discussed in the previous section, together with $\kappa=0$, the basic equations imply
\begin{eqnarray}
\Etwo=0,\ \Ethree=\tfrac{2}{3}\J, \label{EJrelations} \\
\etd{\beta}=\et{\betac} = \mathfrak{B} \O ,
\end{eqnarray}
with $\mathfrak{B}$ an auxiliary real and 0-weighted basic variable, satisfying the integrability condition
\begin{equation}
\etd \et \beta -\et ( \mathfrak{B}\O) +\tfrac{2}{3}\J=0.
\end{equation}

First notice that necessarily $\beta\neq 0$: otherwise substitution of $\beta=\kappa=0$ in (\ref{eq5}) and dividing by $\u^2$, yields
\begin{eqnarray}
  -\frac{2 p'}{(1+3 p' )^2 }(-18 G {p'}^2+3 G^2-6 G p' +9 p'  \epsilon G'-2 G+3 \epsilon G')\theta \nonumber \\
 + \frac{i}{\ez(1+3 p' )} (54 G {p'}^2-54 {p'}^3+6 G p' +9 {p'}^2+6 p' -1) \O \lambda =0,
\end{eqnarray}
after which elimination of $G'$ and $G$ from the real and imaginary part, would lead to a linear EOS. Theorem 4 of \cite{VdB-Radu2016} would then imply $\omega \theta =0$.\\
Continuing with $\beta\neq0$ it follows from $\et \O = \beta$ that $\O$ can be assumed to be a \emph{non-constant} basic quantity, a fact which will be repeatedly used in the sequel.\\

The strategy will be to solve (\ref{eq6}) and its conjugate for $\u$ and $\uc$, which requires the determinant of the corresponding system to be non-0. As this determinant is proportional to $6 p'G  + 3 p' + 1$,
I first investigate the special case where this is 0.
\subsection{$6 p'G  + 3 p' + 1 =0$}
Using $\partial_0(6 p'G  + 3 p' + 1)=0$ to express $G, G'$ as functions of $p'$, the real part of (\ref{eq6}) tells us that
\begin{equation}
\modU = 6 \J \frac{7 {p'}+1}{(1+3 {p'})^2(6 {p'}+1)}-6 \O^2 \frac{81 {p'}^3-27 {p'}^2+45 {p'}+13}{(1+3 {p'})^2(6 {p'}+1)\ez^2} .\label{D0Uexpr}
\end{equation}
Propagating (\ref{D0Uexpr}) along $\partial_0$ and using the imaginary part of (\ref{eq6}) one obtains
\begin{eqnarray}
\fl  \betac \u &=& - i \frac{(11 {p'}+1) (27 {p'}+5)}{3 \lambda (1+3 {p'})^3(6 {p'}+1)}  \ez
\theta \J + i \frac{2673 {p'}^4-3240 {p'}^3+1080 {p'}^2+1116 {p'}+163}{3 \lambda \ez (1+3 {p'})^3  (6 {p'}+1)} \O^2 \theta \nonumber \\
\fl & &  +4 \frac{45 {p'}^3+3 {p'}^2-14 {p'}-2}{(1+3 {p'})^2(4 {p'}-1)(6 {p'}+1)} \O \J-\frac{1}{4 {p'}-1} \mathfrak{B}\O +2 \frac{\lambda^2 \ez}{4 {p'}-1} \O\nonumber \\
\fl & &  +2 \frac{243 {p'}^4-486 {p'}^3-198 {p'}^2+174 {p'}+43}{(1+3 {p'})^2(4 {p'}-1)(6 {p'}+1)\ez^2 }\O^3. \label{D0uexpr}
\end{eqnarray}
As $\theta$ is absent from the real part of (\ref{D0uexpr}), its propagation along $\partial_0$ leads to a kind of `equation of state', relating the $\mu$-dependent variables $p', \ez,\lambda$ to the basic variables $\O,\J$ and $\mathfrak{B}$:

\begin{eqnarray}
\fl -2 \ez^2 (7 {p'}+2) (6 {p'}+1) (1+3 {p'})^3 \mathfrak{B} +\ez^2 (60264 {p'}^5-11052 {p'}^4-10927 {p'}^3-1419 {p'}^2 \nonumber \\
\fl -5 {p'}+3)  \J +(-419904 {p'}^7+882576 {p'}^6-556065 {p'}^5-161811 {p'}^4+108936 {p'}^3\nonumber \\
\fl +29620 {p'}^2-1103 {p'}-521) \O^2+4 \lambda^2 \ez^3 (1+2 {p'}) (6 {p'}+1)^2 (1+3 {p'})^3 = 0. \label{D0EOS1}
\end{eqnarray}

Two similar `equations of state' can be obtained by successively propagating (\ref{D0EOS1}) along $\partial_0$, after which elimination of the basic variables leads to a polynomial equation in $p'$ (with constant coefficients) and hence to a linear EOS.

\subsection{$6 p'G  + 3 p' + 1\neq 0$}
When $6 p'G  + 3 p' + 1\neq 0$ (\ref{eq6}) is solved for $\u$, which is then subsituted in the propagation of the real part of (\ref{eq6}) along $\partial_0$ to obtain an expression for $\modU$ as a linear combination of the basic variables $\J, \mathfrak{B},\O^2$ with $\mu$-dependent coefficients:
\begin{eqnarray}
\fl C \modU + \left( 432 {p'} (5 {p'}-1) G^2+18 (2 {p'}-1) (72 {p'}^2-27 {p'}-1) G-1728 {p'}^3+678 {p'}^2-48 {p'}-6\right) \J \nonumber \\
\fl  -3 (6 G {p'}+3 {p'}+1) (12 G-1) (1+3 {p'})\mathfrak{B} -\frac{1}{{p'} \ez^2} \left( 1728 {p'}^2 (9 {p'}-1) G^2  +36 {p'} (864 {p'}^3 \right. \nonumber \\
\fl \left. -552 {p'}^2+41 {p'}+27) G-34992 {p'}^6+38880 {p'}^5  -12717 {p'}^4-432 {p'}^3+174 {p'}^2-48 {p'}+15 \right) \O^2 \nonumber \\
\fl  +6\frac{\ez \lambda^2}{{p'}}(6 G {p'} + 3 {p'} + 1) (12 G {p'} - 12 {p'}^2 - 2 {p'} + 1) (1 + 3 {p'})=0\label{U_eq1}
\end{eqnarray}
with
\begin{eqnarray}
\fl C = 9 (4 {p'}-1) (1+3 {p'}) (36 {p'}^2-19 {p'}-5) \epsilon G'+1728 G^3 {p'}^2+(1296 {p'}^3-810 {p'}^2+441 {p'}+45) G^2 \nonumber \\
\fl -(7776 {p'}^5-2808 {p'}^4-1422 {p'}^3-159 {p'}^2+138 {p'}+3) G-4 (1+3 {p'})^2 (1-3 {p'})^2  .\label{Ucoeff}
\end{eqnarray}
First notice that the coefficient $C\neq 0$. Otherwise one could use (\ref{Ucoeff}) to eliminate $G'$ from the first and second propagations along $\partial_0$ of (\ref{Ucoeff}) to obtain two more linear combinations of
$\J, \mathfrak{B},\O^2$ with $\mu$-dependent coefficients. Elimination of $\mathfrak{B}$ and $\J$ results then in an equation of the form
\begin{equation}
\mathit{P}_1 (G, p')  \O^2 + \mathcal{P}_2(G, p') =0,
\end{equation}
with $\mathcal{P}_1, \mathcal{P}_2$ constant coefficient polynomials of degree 9 in $G$ and degree 20 and 17 in ${p'}$ respectively.\\
$\O$ being non-constant basic and basic, it follows that $\mathcal{P}_1$ and $\mathcal{P}_2$ are identically 0. Elimination of $G$, using for example a resultant mod $p$ procedure (with $p$ a suitably chosen prime number\footnote{here $p=3$ suffices}), shows that $p'$ is a root of some non-identically 0 constant coefficient polynomial and hence a linear EOS results.\\

When $C\neq 0$ it follows from (\ref{U_eq1}) that $\modU$ is of the form
\begin{equation}\label{U_structure}
\modU = \mathcal{G}_1 \J +  \mathcal{G}_2\O^2 +  \mathcal{G}_3 \mathfrak{B} +  \mathcal{G}_4,
\end{equation}
with $ \mathcal{G}_1, \dots  \mathcal{G}_4$ functions of $\mu$ only. In order to obtain a second equation of this form, one eliminates $\et \J$ from $(\ref{eq2}, \ref{eq9})$ \footnote{having used the relations (\ref{EJrelations}) to express all appearing basic derivatives in terms of $\et \J$} to arrive at an equation which is linear in $\u,\modU$ and $\theta$. Multiplicating this with $\uc$ in order to obtain a 0-weighted equation, the imaginary part of it reads
\begin{eqnarray}
\fl \left( 3 (9 {p'}-13) (1+3 {p'}) (4 {p'}-1) \epsilon G'+(-648 {p'}^3+774 {p'}^2-273 {p'}+39) G^2+(342 {p'}^3+213 {p'}^2 \right. \nonumber \\
\fl \Bigl. +134 {p'}-41) G  +4 (-1+3 {p'})^2 (1+3 {p'})^2 \Bigr) \modU + \left(396 {p'}^2-402 {p'}+78) G-54 {p'}^2 +256 {p'}-58\right) \J \nonumber \\
\fl +\frac{\O^2}{{p'}^2 \ez^2}\left( -36 {p'} (72 {p'}^2 - 65 {p'} + 13) G + 2916 {p'}^5 - 5913 {p'}^4 + 3024 {p'}^3 - 1890 {p'}^2 - 52 {p'} + 91 \right) \nonumber \\
\fl - 3 (1 + 3 {p'}) (6 G {p'} + 3 {p'} + 1) \mathfrak{B}  +6 \lambda^2 \ez (1 + 3 {p'}) (6 G {p'} + 3 {p'} + 1) = 0 . \label{U_eq2}
\end{eqnarray}
Eliminating $\modU$ from (\ref{U_eq1},\ref{U_eq2}) results then in an `equation of state' of the form
\begin{equation}
\BB \mathcal{F}_1 +\J \mathcal{F}_2 + \O^2 \mathcal{F}_3 + \mathcal{F}_4 =0,\label{KV_EOS1}
\end{equation}
with $\mathcal{F}_1, \ldots \mathcal{F}_4$ defined by
\begin{eqnarray}
\fl \mathcal{F}_1 = 6 {p'} \ez^2 (1+3 {p'}) \Bigl( (9 (1 + 3 {p'}) (9 {p'} - 13) G - 3 (9 {p'} - 7) (1 + 3 {p'})^2) \epsilon G' \Bigr. \nonumber \\
\fl \left. -(486 {p'}^2 - 351 {p'} + 117) G^3 + (216 {p'}^2 + 216 {p'} + 144) G^2 \right. \nonumber \\
\fl \Bigl. + (1 + 3 {p'}) (162 {p'}^3 + 9 {p'}^2 - 24 {p'} - 23) G \Bigr) , \\
\fl \mathcal{F}_2 = 4 {p'} \ez^2  \left( (-27 (1 + 3 {p'}) (5 {p'} - 1) (9 {p'} - 13) G + 9 (1 + 3 {p'}) (135 {p'}^3 - 15 {p'}^2 - 145 {p'} + 33)) \epsilon G' \right. \nonumber \\
\fl \left.+ (7290 {p'}^3 - 4779 {p'}^2 - 351) G^3 + (8748 {p'}^4 - 20736 {p'}^3 + 4752 {p'}^2 + 594 {p'} + 234) G^2 \right. \nonumber \\
\fl  + (-16038 {p'}^5 - 405 {p'}^4 + 14580 {p'}^3 - 1692 {p'}^2 - 486 {p'} + 57) G \nonumber \\
\fl \Bigl. + 8 (9 {p'} - 4) (-1 + 3 {p'})^2 (1 + 3 {p'})^2 \Bigr), \\
\fl \mathcal{F}_3 = 432 {p'} (9 {p'}-13) (1+3 {p'}) (9 {p'}-1) G-36 (1+3 {p'}) (1755 {p'}^3-1473 {p'}^2-43 {p'}+65)) \epsilon G' \nonumber \\
\fl -432 {p'} (486 {p'}^3-297 {p'}^2-13) G^3+(-314928 {p'}^5+624024 {p'}^4-144180 {p'}^3-9396 {p'}^2 -1692 {p'}\nonumber \\
\fl -2340) G^2+(472392 {p'}^7-472392 {p'}^6+624024 {p'}^5-262764 {p'}^4-72576 {p'}^3+2232 {p'}^2+864 {p'}+444) G\nonumber \\
\fl -8 (729 {p'}^4-810 {p'}^3+270 {p'}^2-18 {p'}-19) (-1+3 {p'})^2 (1+3 {p'})^2, \\
\fl \mathcal{F}_4 = 3 \lambda^2 \ez^3)(1+3 {p'}) \left((-36 (9 {p'}-13) {p'} G (1+3 {p'})+3 (72 {p'}^2-89 {p'}+13) (1+3 {p'})^2) \epsilon G' \right. \nonumber \\
\fl \left. +36 {p'} (54 {p'}^2-39 {p'}+13) G^3+(-1944 {p'}^4+810 {p'}^3-909 {p'}^2-732 {p'}+39) G^2\right. \nonumber \\
\fl \left. -(1+3 {p'}) (648 {p'}^4-306 {p'}^3-309 {p'}^2-226 {p'}+41) G+4 (-1+3 {p'})^2 (1+3 {p'})^3 \right) .
\end{eqnarray}

First notice that the following expression $\Sigma$ is non-0 i) when $\mathcal{F}_1\mathcal{F}_2 = 0$ (as in cases I,II and II below) and ii) when $\mathcal{F}_1\mathcal{F}_2 \neq 0$ and $\F_2=c_2 \F_1$ or $\F_3=c_1 \F_1-c_2 \F_2$ (as in cases IV or V below):

\begin{equation}
\fl \Sigma = 18 (9 {p'} - 13) {p'} G^2 + 3 (1 + 3 {p'}) (81 {p'}^3 - 153 {p'}^2 + 88 {p'} - 12) G - (9 {p'} - 7) (-1 + 3 {p'})^2 (1 + 3 {p'})^2 \label{Sigmadef}
\end{equation}
In fact, propagating $\Sigma =0$ and eliminating $\epsilon G'$ (using either $\mathcal{F}_1=0$, $\mathcal{F}_2=0$, $\F_2=c_2 \F_1$ or $\F_3=c_1 \F_1-c_2 \F_2$ together with its propagation) leads to a linear EOS.
\\

As $\O$ is non-constant and basic, one has $\textrm{c-rank}(1,\O^2)= 2$.  By lemma \ref{lincolemma} and (\ref{KV_EOS1}) it follows that $\textrm{c-rank}(1,\O^2,J,\BB) = 2 \textrm{ or } 3$, which leads us investigate the following cases:
\begin{enumerate}[I)]
\item $\F_1=\F_2=0$,
\item $\F_1=0\neq\F_2$,
\item $\F_2=0\neq\F_1$,
\item $\F_1\F_2 \neq 0$ and $\F_2=c_2 \F_1, \, \F_3=c_3 \F_1, \F_4=c_4 \F_1$,
\item $\F_1\F_2 \neq 0$ and $\F_3=c_1 \F_1-c_2 \F_2, \, \F_4=c_3\F_1-c_4\F_2$,
\end{enumerate}
with $c_1,c_2,c_3,c_4$ constants (necessarily real).
\begin{enumerate}
\item[\bf{Case I}] As $\O$ is not constant this implies also $\F_3=\F_4=0$. Eliminating $\epsilon G'$ and $G$ from $\F_1,\F_2,\F_3$ immediately yields a linear EOS.
\item[\bf{Case II}]  By lemma \ref{lincolemma} $\textrm{c-rank}\,(1,\O^2,\J)=2$ and hence $\textrm{c-rank}\,(\F_2,\F_3,\F_4)=1$. Therefore constants $c_3,c_4$ exist (both non-0, as otherwise again a linear EOS results) such that $\F_3-c_3 \F_2= \F_4 - c_4 \F_2 =0$. Eliminating $\epsilon G'$ and $G$ from these and from $\F_2=0$ one obtains $\mathcal{P}_1=0$, with $\mathcal{P}_1$ a polynomial of 4th degree in $\lambda$ and coefficients in $\ez$ and $p'$. Propagating the latter and eliminating $\lambda$ the result factorizes to give us two polynomial equations, $\mathcal{P}_2=0$ or $\mathcal{P}_3=0$: propagating either of these to eliminate $\ez$ again $p'$ is recognized as the root of a polynomial with constant (non-0) coefficients, leading to a linear EOS.
\item[\bf{Case III}] For the same reasons as above now constants $c_3,c_4$ exist (both non-0, as otherwise again a linear EOS results) such that $\F_3-c_3 \F_1= \F_4 - c_4 \F_1 =0$, together with a basic relation $\BB+c_3 \O^3 +c_4 \O = 0$. The analysis is slightly more complicated than the one in case II, but essentially proceeds along the same lines and implies again a linear EOS.
\item[\bf{Case IV}] We still have three $\F$-relations to proceed with the successive elimination of $\epsilon G'$ and $G$, after which twice propagating the remaining equation allows one (dividing out the non-0 factor (\ref{Sigmadef})) to obtain once again a polynomial equation in $p'$ with non-0 coefficients and hence a linear EOS.
\item[\bf{Case V}] This is the case with $\F_1\F_2\neq 0$ and $\textrm{c-rank}(1,\O^2,J,\BB) = 2$ and is the most complicated, as now only two $\F$-relations are available (and, furthermore, for which elimination of $G'$ does not seem to lead anywhere). Now however the $\textrm{c-rank}$ of the basic coefficients is 1,
    \begin{eqnarray}
    \J=c_2\O^2+c_4,\\
    \BB=-c_1\O^2-c_3,
    \end{eqnarray}
 which can be used to obtain
 \begin{equation}
 \et(\J) = 2 c_2\O\beta,\label{case5_Jrel}
 \end{equation}
 thereby simplifying the algebraic equation (\ref{eq2}). Dividing the latter by $u$ one obtains a 0-weighted relation, the imaginary part of which reads,
 \begin{equation}
 \fl -5 i  \lambda (1+3 {p'}) G \modU \theta-1/4 \lambda^2 (1+3 {p'}) (4 {p'} c_2 \ez^2-63 {p'}-\
13) (\betac u-\beta \uc)/\ez/{p'} = 0.
 \end{equation}
 Substituting in this the earlier expressions (\ref{D0uexpr},\ref{U_eq2}) for $u$ and $\modU$, one obtains a quadratic equation in $\O$ with coefficients depending on $G',G,\lambda,\ez,p'$ and the constants $c_1, \ldots c_4$. As $\O$ is basic and not constant, it follows that all coefficients must vanish: eliminating then $G'$ from the $\O^2,\O^0$ coefficients, one obtains a polynomial which factorizes as $\mathcal{P}_1 \mathcal{P}_2\mathcal{P}_3=0$, with
 \begin{eqnarray}
 \fl \mathcal{P}_1 &=& p'(c_2\ez^2 - 63) - 13   \\
 \fl \mathcal{P}_2 &=& 3 {p'} (36 {p'}^2 c_2 \ez^2-37 {p'} c_2 \ez^2+5 c_2 \ez^2-432 {p'}^2+282 {p'}
+14) G\nonumber \\
\fl & & -(1+3 {p'}) (-1+3 {p'})^2 (4 {p'} c_2 \ez^2-63 {p'}-13) \\
\fl \mathcal{P}_3 &=& (-12 {p'} \ez (c_2 \ez^2-6) \lambda^2+6 {p'} (c_1 c_4 \ez^2-c_2 c_3
\ez^2+6 c_3-4 c_4)) G\nonumber \\
\fl & & -2 \ez (-4 {p'} c_2 \ez^2+81 {p'}^3-27 {p'}^2+27 {p'}+
7) \lambda^2-4 {p'} c_1 c_4 \ez^2+4 {p'} c_2 c_3 \ez^2-81 {p'}^3 c_3\nonumber \\
\fl & & -270 {p'}^3 c_4+
27 {p'}^2 c_3+234 {p'}^2 c_4-27 {p'} c_3-90 c_4 {p'}-7 c_3+14 c_4 .
 \end{eqnarray}
 A tedious but straightforward process of propagation of each of these polynomials and eliminating $G', G, \lambda, \ez$ from the resulting expressions and from $\F_3=c_1 \F_1-c_2 \F_3, \, \F_4=c_3\F_1-c_4\F_2$ eventually leads again to a linear EOS.
\end{enumerate}
\end{proof}

The interested reader can obtain Maple worksheets with full details of all the proofs from the author.

\section{Appendix 1}
Complexified equations for acceleration and expansion (see section \ref{complexification}):

\begin{eqnarray}
 \partial_0(\u) &=& p' \u \theta +z,\label{ap0u} \\
 \d (\u)- i \tau \u  &=& \frac{9 G p' + 3 p' + 1}{3 (1 + 3 p')}  \u^2  + 2 \pi  \U_3 + \frac{3}{1 + 3 p'}\Etwo ,\label{apdu}\\
 \dc (\u) - i \tauc  \u &=&  \frac{G p'}{1 + 3 p'}\modU - \frac{6 G p' + 3 p' + 1}{3 (1 + 3 p')} \U_3^2 + i \rho \U_3 + \frac{\theta^2}{3 \lambda^2}  \nonumber \\
  & & - 2 i \frac{\O\theta}{\lambda \ez} + \frac{9 p' + 7}{\ez^2 (1 + 3 p')} \O^2 - \frac{\mu}{\lambda^2} - \frac{\J}{3} - \frac{\Ethree}{1 + 3 p'} ,\label{apdcu}\\
 \Z(\u) - \frac{i \nu}{2} \u &=& \frac{9 G p'+3 p'+1}{3(1+3 p')} \U_3 \u +\kappa \U_3+\frac{3}{1+3 p'} \Eone,\label{apZu}
\end{eqnarray}

\begin{eqnarray}
\partial_0(\U_3) &=& p' \U_3 \theta + \frac{3 \lambda}{4 \epsilon_0} \O (\rho+\rhoc),\label{ap0U3} \\
\d (\U_3) &=& \left( \frac{9 G p'+3 p'+1}{3 (1+3 p')} \U_3 + i \frac{\rhoc}{2}  \right) \u- \pi \uc +\frac{3}{1+3p'} \Eone, \label{apdU3}\\
\Z (\U_3) &=& -\frac{6 G p' + 3 p' + 1}{6 (1 + 3 p')} \modU - \tfrac{1}{2}(\kappac \u+ \kappa \uc)
+ \frac{12 G p' + 3 p' + 1}{6 (1 + 3 p')} \U_3^2 \nonumber \\
& & + \frac{\theta^2}{6 \lambda^2} + \frac{9 p' - 5}{2 \ez^2 (1 + 3 p')} \O^2 - \frac{\mu}{2 \lambda^2} - \frac{\J}{6} + \frac{1}{1 + 3 p'} \Ethree,\label{apZU3}
\end{eqnarray}

\begin{eqnarray}
\partial_0(\theta) &=& \frac{\lambda^2 p'}{2}(1-2 G)(\modU +\U_3^2) +(\frac{p'}{2}-\frac{1}{3})\theta^2+\frac{\lambda^2}{2 \ez^2}(9p'+4)\O^2 \nonumber \\
& & -\frac{\lambda^2 p'}{2} \J -\frac{3}{2} \lambda^4\ez +(1-\frac{3}{2}p')\mu  ,\\
\d (\theta) &=& z ,\\
\Z (\theta) &=& \frac{3 \lambda}{4\ez } \O (\rho+\rhoc),\label{apZtheta}
\end{eqnarray}

\section{Appendix 2}
Basic equations (see section \ref{complexification}):
\begin{eqnarray}
\d (\O) = \beta -\kappa \O, \label{apdO}\\
\Z (\O) = \frac{i}{2} \O (\rho - \rhoc),\\
\d (\rho)+2 i \dc (\pi)+4 \pi \tauc = -(\rho+\rhoc ) \kappa-2 i \Eone ,\\
\d (\kappa)-2 \Z(\pi)-i \kappa \tau+2 i \nu \pi = i (\rhoc -\rho)\pi-\kappa^2-\Etwo ,\\
\dc (\kappa)-i \kappa \tauc-i \Z(\rho) = -| \kappa |^2-2 | \pi |^2 +\tfrac{2}{9} \J-\tfrac{1}{3} \Ethree+\tfrac{1}{2} \rho^2 ,\\
\d (\nu)-2 \Z(\tau) = -2 \pi \tauc+i (\rhoc-\nu) \tau -( \rhoc+\nu ) \kappa +2 i ( \pi \kappac- \Eone) ,\\
\dc (\tau)-\d (\tauc) = i \left( \tfrac{1}{2}(\rho+ \rhoc) \nu+2 | \tau |^2-2 | \pi|^2 +\tfrac{1}{2} | \rho|^2-\tfrac{2}{9} \J-\tfrac{2}{3} \Ethree \right). \label{Beq1}
\end{eqnarray}

\section{Appendix 3 (integrability conditions)}

From the basic equations (\ref{dOmega},\ref{ZOmega}) and the $[\ZZ,\, \et] \O$, $[\etd,\, \et] \O$ commutator relations one finds
\begin{eqnarray}
\fl \etd(\beta)-\et(\betac)-i \O\ZZ(\rho+\rhoc)=\betac \kappa - \beta \kappac,\label{beq2}\\
\fl \O \, \et(\rho-\rhoc)+2 i \left(\ZZ(\beta)-\O \ZZ(\kappa)\right)=
2 i \pi(\betac-\O \kappac)-\rho (\beta+\O \kappa)+2 \beta \rhoc . \label{beq6}
\end{eqnarray}

\noindent Next come the integrability conditions for $\u$ and $\theta$,\\

\noindent $[\partial_0,\, \ZZ] \u$:

\begin{eqnarray}
\fl 2 \frac{\lambda \ez p'}{1+3 p' }  (18 G {p'}^2-3 G^2+6 G p' -9 p'  \epsilon G'+2 G-3 \epsilon G') \u
\theta \U_3 \nonumber \\
\fl +\frac{i}{3} \lambda^2 (1+3 p' ) (27 G p' -27 {p'}^2+15 p' -2) \O \U_3 \u \nonumber \\
\fl +\tfrac{1}{2} \lambda^2 (9 G \rhoc p' +9 G p'  \rho-36 \rhoc {p'}^2+18 \rho {p'}^2-6 \rhoc
 p' +6 \rho p' +2 \rhoc) \O \u \nonumber \\
 \fl -i \lambda^2 (1+3 p' ) (9 p' -1) \kappa \O \U_3+3
 i \lambda^2 \beta (3 G p' -6 {p'}^2+p' +1) \U_3+6 \frac{\lambda \ez p'}{1+3 p'}  (3 G-2)
\Eone \theta\nonumber \\
\fl -3 i (9 p' -1) \lambda^2 \O \Eone+\tfrac{3}{2} \lambda^2 \kappa (1+3 p' )
 (\rho+\rhoc) \O-3 i \lambda^2 (1+3 p' ) \ZZ(\beta) = 0. \label{eq10}
\end{eqnarray}

\noindent $[\partial_0,\, \et] \u$:

\begin{eqnarray}
\fl  \frac{2 p'}{(1+3 p' )^2 }(18 G {p'}^2-3 G^2+6 G p'+2 G-3(1+3 p') \epsilon G')\theta \u^2 \nonumber \\
\fl + \frac{i}{\ez(1+3 p' )} (54 G {p'}^2-54 {p'}^3+6 G p' +9 {p'}^2+6 p' -1) \O \lambda \u^2 \nonumber\\
\fl + i \frac{\lambda (9 p' -1)}{\ez} \kappa \O \u+i \frac{\lambda}{\ez (1+3 p') } (18 G p' -54 {p'}^2-3 p' +5) \beta  \u +6 \frac{p'}{(1+3 p' )^2}  \Etwo (3 G-2) \theta \nonumber\\
\fl -2 i \frac{(9 p' -1)\lambda}{\ez} \O \pi \U_3-3 i \frac{(9 p' -1) \lambda}{\ez(1+3 p' )} \Etwo \O+3 (\rho+\rhoc) \pi \frac{\lambda}{\ez} \O-3 i  \frac{\lambda}{\ez}\et(\beta) = 0, \label{eq5}
\end{eqnarray}

\noindent $[\partial_0,\, \et] \uc$:

\begin{eqnarray}
\fl i (3 G p' +18 {p'}^2+3 p' -1) \beta \lambda \uc -i (3 G p' -18 {p'}^2+2) \lambda \betac u-i (1+3 p' ) \lambda (9 p' -1) \kappa \uc \O \nonumber \\
\fl -\tfrac{2}{3} i (45 G {p'}^2+21 G
p' +9 {p'}^2-1) \lambda \modU \O  -\tfrac{i}{3}(90 G {p'}^2+6 G p' +9 {p'}^2-1) \lambda {\U_3}^2 \O\nonumber \\
\fl -\tfrac{2}{3} \frac{\ez}{1+3 p'} p'  (-18 G {p'}^2+3 G^2-6 G p' +9
 p'  \epsilon G'-2 G+3 \epsilon G') (\modU -2 \U_3^2) \theta \nonumber \\
 \fl -(6 G ( \rhoc +\rho ) p' -27 \rhoc {p'}^2 +3 (\rho - \rhoc ) p'
 +2 \rhoc+\rho) \lambda \O {\U_3} +\tfrac{5}{3} i \frac{(1+3 p' ) (-1+3 p' )}{\lambda }\O \theta^2 \nonumber \\
\fl -2\frac{\ez}{1+3 p'} (3 G-2) p'  \theta \Ethree +\tfrac{2}{3}\frac{1}{\ez(1+3 p' )} (-81 {p'}^3+36 G
 p' +27 {p'}^2-27 p' -7) \theta \O^2\nonumber \\
 \fl  +\frac{i \lambda}{\ez^2 }(135 {p'}^2+96 p' +1) \O^3
 -\tfrac{i}{2} \frac{1+3 p'}{\lambda} (12 \epsilon +3 \rhoc (\rhoc + \rho ) \lambda^2+30 \mu p' -10 \mu) \O\nonumber \\
\fl  -i (9 p' -1) \lambda \Ethree \O +3 i (1+3 p' ) \lambda \et(\betac)-\tfrac{i}{3}(1+3p')\lambda(15 p'-1)\O\J = 0, \label{eq6}
\end{eqnarray}

\noindent $[\ZZ,\, \et] \u$:

\begin{eqnarray}
\fl 18 {p'}^2 (\U_3 \Etwo-\u \Eone)+2 i \O \pi \frac{(1+3 p' )^2}{\lambda \ez} \theta+\tfrac{3}{2} i (1+3 p' )\rhoc \Etwo +3 \pi (1+3 p' ) \Ethree \nonumber \\
\fl -12  \frac{1+3 p'}{\ez^2}\pi \O^2+3 (1+3 p' ) \left(\ZZ(\Etwo)-\et(\Eone) -2 \kappa \Eone \right) = 0,\label{eq7}
\end{eqnarray}

\noindent $[\ZZ,\, \et] \uc$:

\begin{eqnarray}
\fl 6\frac{{p'}^2}{1+3 p'}(\Eone \uc - \Eonec \u )-\tfrac{4}{9} i \frac{9 G p' -9 {p'}^2+1}{\ez \lambda} \O \theta \U_3+ \frac{1+3 p'}{3 \lambda \ez} (\rhoc-
\rho) \O \theta \nonumber \\
\fl -\tfrac{i}{2} (\rho+\rhoc) \Ethree+3 i (\rho+\rhoc) \frac{p' +1}{\ez^2} \O^2+\pic \Etwo-\pi \Etwoc -\et(\Eonec)+\etd(
\Eone) = 0, \label{eq8}
\end{eqnarray}

\noindent $[\etd,\, \et] \u$:

\begin{eqnarray}
\fl 18 \lambda \ez {p'}^2 \Ethree \u -2 i (9 G p' -9 {p'}^2+1) (1+3 p' ) \O \u \theta
-3  \frac{\lambda (p' +1) (81 {p'}^2-5)}{\ez} \O^2 \u \nonumber \\
\fl +18 \lambda \ez {p'}^2 (\Etwo \uc-2 \Eone \U_3) -4 i \kappa (1+3 p' )^2 \O
 \theta+2 i \beta (1+3 p' )^2 \theta\nonumber \\
 \fl +12\frac{\lambda (1+3 p' )}{\ez} \kappa \O^2-3 \frac{\lambda (9 p' +11) (1+3 p')}{\ez} \beta  \O  \nonumber \\
 \fl+3 \lambda \ez (1+3 p' ) \left(\kappa \Ethree -\kappac \Etwo+6 \pi \Eonec -2 i \rho \Eone -3 i \rhoc \Eone -2 \ZZ(\Eone)+\et(\Ethree)+\etd(\Etwo) \right) =
0, \label{eq9}
\end{eqnarray}

\noindent $[\partial_0,\, \ZZ] \theta$:

\begin{eqnarray}
\fl \tfrac{3}{2} \ez p'  \lambda \frac{-1+2 G}{1+3 p'}( \Eonec \u+ \Eone \uc )+\tfrac{1}{2}  \frac{\ez p'}{1+3 p'}  (2 G^2 p' -2 G^2-G
p' -6 p'  \epsilon G'+G-2 \epsilon G') \lambda \U_3 ( \modU+{\U_3}^2)\nonumber \\
\fl +\tfrac{5}{6} \frac{G p'  \ez}{\lambda} \U_3 \theta^2+(-1+
2 G) \frac{\ez p'  \lambda}{1+3 p' } \U_3 \Ethree -\tfrac{5}{6} \lambda \ez p'  G \U_3 \J \nonumber \\
\fl +\tfrac{1}{6}  \frac{\lambda}{\ez(1+3 p' } (135 G {p'}^2-162 {p'}^3-3 G p' -108
 {p'}^2+54 p' +16) \O^2 \U_3 \nonumber \\
 \fl -\tfrac{1}{2} \frac{\ez}{\lambda}(3 p'  \lambda^4 \ez+\lambda^4 \ez+5 G p'  \mu)  \U_3+
\tfrac{1}{2} i \frac{ \lambda}{\ez}(-\rho+\rhoc) (9 p' +4) \O^2+\tfrac{1}{2} \ZZ(\J) \lambda
\ez p'  = 0, \label{eq1}
\end{eqnarray}

\noindent $[\partial_0,\, \et] \theta$:

\begin{eqnarray}
\fl \tfrac{1}{2} \lambda^2 (2 G^2 p' -2 G^2-G p' -6 p'  \epsilon G'+G-2 \epsilon G') \u (\modU+\U_3^2)
+\tfrac{5}{6} G (1+3 p' ) \u \theta^2-\tfrac{5}{2} i \frac{G \lambda (1+3 p' )}{\ez} \O  \u \theta \nonumber \\
\fl +\tfrac{1}{2} (-1+2 G) \lambda^2(3 \Etwo \uc  -\Ethree \u)+\frac{1}{12}\frac{ \lambda^2}{\ez^2 p'} (270 G {p'}^2-1053 {p'}^3+138 G p' -297 {p'}^2+81 p' +29)\O^2 \u \nonumber \\
\fl -\tfrac{5}{6} G \lambda^2 (1+3 p' ) \u \J-\frac{1+3 p' }{2 p'} (3 p'  \lambda^4 \ez+
\lambda^4 \ez+5 G p'  \mu) \u+3
\Eone (-1+2 G) \lambda^2 \U_3\nonumber \\
\fl +\kappa \lambda^2 \frac{(9 p' +4) (1+3 p' )}{\ez^2 p'} \O^2
-\beta \lambda^2 \frac{(1+3 p' ) (63 p' +13)}{4 \ez^2 p'}  \O+\tfrac{1}{2} (1+3 p' ) \lambda^2 \et(\J) = 0,  \label{eq2}
\end{eqnarray}

\noindent $[\ZZ,\, \et] \theta$:

\begin{eqnarray}
\fl \frac{1}{3(1+3 p')}  (36 G p'+9 p'^2-1) \O \U_3 \u+ i (\rhoc+\tfrac{3}{2} \rho -\tfrac{3}{2} \rhoc p'-6 \rho p') \O u \nonumber \\
\fl- (9 p'-1)  \O (\pi \uc+ \kappa \U_3)  - (3 p'-2) \beta \U_3-3  \frac{15 p'+1}{1+3 p'} \O \Eone+\tfrac{3}{2}  (2 i \kappa \rhoc+ i \kappa \rho+2   \pi \kappac) \O \nonumber \\
\fl - 3  i \rho \beta -6 \pi \betac -3  \O \left(\ZZ(\kappa)+2 \etd(\pi)\right) = 0, \label{eq3}
\end{eqnarray}

with $[\etd,\, \et] \theta$ being identical to the real part of (\ref{eq6}).\\

\noindent There remain the integrability conditions for $\U_3$, which under the previous algebraic relations become identities, except for $[\ZZ,\, \et] \U_3$ and $[\etd,\, \et] \U_3$, which reduce to the following basic equations,
\begin{eqnarray}
\fl 18 \ZZ(\Eone)+9 \etd(\Etwo)-3 \et(\Ethree)+2 \et(\J)+9 i  (\rhoc-2 \rho) \Eone+9 \kappac \Etwo-18 \Eonec \pi
-9 \Ethree \kappa = 0, \label{beq8}\\
\fl 2 \ZZ(\J)+6 \ZZ(\Ethree)+9 (\et(\Eonec)+ \etd(\Eone))-\tfrac{9}{2} i  (\rho-\rhoc) \Ethree+18 (\kappac \Eone+ \kappa \Eonec)\nonumber \\
\fl +9 (\pic \Etwo+ \pi \Etwoc ) = 0 .\label{beq9}
\end{eqnarray}

A more compact form of the $\theta$ commutator relations can be obtained by refraining from writing out the second derivatives appearing in the commutators. One obtains then

$[\partial_0,\, \ZZ] \theta$:

\begin{eqnarray}
\ZZ(\Theta) &=& -p' \U_3 \Theta+\tfrac{1}{4} \frac{\lambda (3 p'-2)}{\ez} \O (\rho+\rhoc) \theta \label{apsol1}
\end{eqnarray}

$[\partial_0,\, \et] \theta$:

\begin{eqnarray}
\et(\Theta) &=& p' [i \frac{\lambda (2-9 G)}{2\ez}\O \theta - \Theta ] \u +[i \frac{(9p'-1)\lambda}{2\ez} \O + \frac{3p'-2}{3} \theta ] z \label{apsol2}
\end{eqnarray}

$[\ZZ,\, \et] \theta$:

\begin{eqnarray}
\ZZ(z) &=& -\tfrac{1}{4} \frac{(6 p'+1) \lambda}{\ez} (\rho+\rhoc) \O \u- i (\rhoc+\half \rho) z+ \pi \zc +\tfrac{3}{4} \frac{\lambda}{\ez} \et(\rho+\rhoc) \label{apsol3}
\end{eqnarray}

$[\etd,\, \et] \theta$:

\begin{eqnarray}
\etd(z)-\et(\zc) & & + i \frac{\O}{\lambda \ez}[ 4 \Theta-\tfrac{3}{4} (\rho+\rhoc)^2 \lambda^2 ] = 0 \label{apsol4}
\end{eqnarray}

Of course, subsitution of (\ref{Raych}) and (\ref{e0z}-\ref{ZZz}) in these equations reduces them to the explicit forms (\ref{eq1}-\ref{eq3}) and $\Re(\ref{eq6})$.

Also note that, as expected, all equations (\ref{beq2}-\ref{apsol4}) are well-weighted. \\

\section*{Acknowledgement}

All calculations were performed or checked with Maple 2022. \\
I thank John Carminati for noticing a number of typos in earlier versions of this document.

\section*{References}

\end{document}